\newcommand{\okra}[1]{\left( #1 \right)}
\newcommand{\kwad}[1]{\left[ #1 \right]}
\newcommand{\klam}[1]{\left\{ #1 \right\}}
\newcommand{\sred}[1]{\mathbf{E}\, #1}
\newcommand{\boole}[1]{{\bf 1}{\klam{#1}}}
\newcommand{\peq}{\stackrel{+}{=}}
\newcommand{\pge}{\stackrel{+}{>}}
\newcommand{\ple}{\stackrel{+}{<}}
\DeclareMathOperator{\card}{card}
\DeclareMathOperator{\PPM}{PPM}
\DeclareMathOperator*{\hilberg}{hilb}
\newtheorem{definition}{Definition}
\newtheorem{theorem}{Theorem}
\newenvironment*{proof}{\begin{trivlist}\item[]
\noindent\textbf{Proof:}}{$\Box$\par\end{trivlist}}
\newenvironment*{proof*}[1]{\begin{trivlist}\item[]
\noindent\textbf{Proof of #1:}}{$\Box$\par\end{trivlist}}
\begin{document}

\pagestyle{empty}   
\begin{titlepage}

  \title{Is Natural Language a Perigraphic Process?\\
    The Theorem about Facts and Words Revisited}

  \author{{\L}ukasz D\k{e}bowski\thanks{
    {\L}. D\k{e}bowski is with
    the Institute of Computer Science, Polish Academy of Sciences, 
    ul. Jana Kazimierza 5, 01-248 Warszawa, Poland 
    (e-mail: ldebowsk@ipipan.waw.pl).}}
\date{}

\maketitle

\begin{abstract}
  As we discuss, a stationary stochastic process is nonergodic when a
  random persistent topic can be detected in the infinite random text
  sampled from the process, whereas we call the process strongly
  nonergodic when an infinite sequence of independent random bits,
  called probabilistic facts, is needed to describe this topic
  completely. Replacing probabilistic facts with an algorithmically
  random sequence of bits, called algorithmic facts, we adapt this
  property back to ergodic processes. Subsequently, we call a process
  perigraphic if the number of algorithmic facts which can be inferred
  from a finite text sampled from the process grows like a power of
  the text length. We present a simple example of such a
  process. Moreover, we demonstrate an assertion which we call the
  theorem about facts and words. This proposition states that the
  number of probabilistic or algorithmic facts which can be inferred
  from a text drawn from a process must be roughly smaller than the
  number of distinct word-like strings detected in this text by means
  of the PPM compression algorithm. We also observe that the number of
  the word-like strings for a sample of plays by Shakespeare follows
  an empirical stepwise power law, in a stark contrast to Markov
  processes. Hence we suppose that natural language considered as a
  process is not only non-Markov but also perigraphic.
  \\[2ex]
  \textbf{Keywords:} stationary processes, PPM code, mutual
  information, power laws, algorithmic information theory, natural
  language
\end{abstract}


\end{titlepage}
\pagestyle{plain}   


\section{Introduction}
\subsection*{}


One of motivating assumptions of information theory
\cite{Shannon48,Shannon51,CoverThomas06} is that communication in
natural language can be reasonably modeled as a discrete stationary
stochastic process, namely, an infinite sequence of discrete random
variables with a well defined time-invariant probability
distribution. The same assumption is made in several practical
applications of computational linguistics, such as speech recognition
\cite{Jelinek97} or part-of-speech tagging
\cite{ManningSchutze99}. Whereas state-of-the-art stochastic models of
natural language are far from being satisfactory, we may ask a more
theoretically oriented question, namely:
\begin{quote}
  What can be some general mathematical properties of natural language
  treated as a stochastic process, in view of empirical data?
\end{quote}
In this paper, we will investigate a question whether it is reasonable
to assume that natural language communication is a \emph{perigraphic}
process.

To recall, a stationary process is called ergodic if the relative
frequencies of all finite substrings in the infinite text generated by
the process converge in the long run with probability one to some
constants---the probabilities of the respective strings. Now, some
basic linguistic intuition suggests that natural language does not
satisfy this property, cf. \cite[Section 6.4]{CoverThomas06}. Namely,
we can probably agree that there is a variation of topics of texts in
natural language, and these topics can be empirically distinguished by
counting relative frequencies of certain substrings called keywords.
Hence we expect that the relative frequencies of keywords in a
randomly selected text in natural language are random variables
depending on the random text topic. In the limit, for an infinitely
long text, we may further suppose that the limits of relative
frequencies of keywords persist to be random, and if this is true then
natural language is not ergodic, i.e., it is nonergodic.

In this paper we will entertain first a stronger hypothesis, namely,
that natural language communication is strongly nonergodic.
Informally speaking, a stationary process will be called strongly
nonergodic if its random persistent topic has to be described using an
infinite sequence of probabilistically independent binary random 
variables, called probabilistic facts. Like nonergodicity, strong
nonergodicity is not empirically verifiable if we only have a single
infinite sequence of data. But replacing probabilistic facts with an
algorithmically random sequence of bits, called algorithmic facts, we
can adapt the property of strong nonergodicity back to ergodic
processes.  Subsequently, we will call a process \emph{perigraphic} if
the number of algorithmic facts which can be inferred from a finite
text sampled from the process grows like a power of the text
length. It is a general observation that perigraphic processes have
uncomputable distributions.

It is interesting to note that \emph{perigraphic} processes can be
singled out by some statistical properties of the texts they generate.
We will exhibit a proposition, which we call the theorem about facts
and words. Suppose that we have a finite text drawn from a stationary
process.  The theorem about facts and words says that the number of
independent probabilistic or algorithmic facts that can be reasonably
inferred from the text must be roughly smaller than the number of
distinct word-like strings detected in the text by some standard data
compression algorithm called the Prediction by Partial Matching (PPM)
code \cite{ClearyWitten84}. It is important to stress that in this
theorem we do not relate the numbers all facts and all word-like
strings, which would sound trivial, but we compare only the numbers of
independent facts and distinct word-like strings.

Having the theorem about facts and words, we can also discuss some
empirical data.  Since the number of distinct word-like strings for
texts in natural language follows an empirical stepwise power law, in
a stark contrast to Markov processes, consequently, we suppose that
the number of inferrable random facts for natural language also
follows a power law. That is, we suppose that natural language is not
only non-Markov but also \emph{perigraphic}.

Whereas in this paper we fill several important missing gaps and
provide an overarching narration, the basic ideas presented in this
paper are not so new.  The starting point was a corollary of Zipf's
law and a hypothesis by Hilberg. Zipf's law is an empirical
observation that in texts in natural language, the frequencies of
words obey a power law decay when we sort the words according to their
decreasing frequencies \cite{Zipf65,Mandelbrot54}. A corollary of this
law, called Heaps' law
\cite{KuraszkiewiczLukaszewicz51en,Guiraud54,Herdan64,Heaps78}, states
that the number of distinct words in a text in natural language grows
like a power of the text length. In contrast to these simple empirical
observations, Hilberg's hypothesis is a less known conjecture about
natural language that the entropy of a text chunk of an increasing
length \cite{Hilberg90} or the mutual information between two adjacent
text chunks
\cite{EbelingNicolis91,EbelingPoschel94,BialekNemenmanTishby01b,CrutchfieldFeldman03}
obey also a power law growth.  In paper \cite{Debowski06}, it was
heuristically shown that if Hilberg's hypothesis for mutual
information is satisfied for an arbitrary stationary stochastic
process then texts drawn from this process satisfy also a kind of
Heaps' law if we detect the words using the grammar-based codes
\cite{Wolff80,DeMarcken96,KitWilks99,KiefferYang00}.  This result is a
historical antecedent of the theorem about facts and words.

Another important step was a discovery of some simple strongly
nonergodic processes, satisfying the power law growth of mutual
information, called Santa Fe processes, discovered by Dębowski in
August 2002, but first reported only in \cite{Debowski09}.
Subsequently, in paper \cite{Debowski11b}, a completely formal proof
of the theorem about facts and words for strictly minimal
grammar-based codes \cite{KiefferYang00,CharikarOthers05} was
provided. The respective related theory of natural language was later
reviewed in \cite{Debowski11d,Debowski15} and supplemented by a
discussion of Santa Fe processes in \cite{Debowski12}. Some drawback
of this theory at that time was that strictly minimal grammar-based
codes used in the statement of the theorem about facts and words are
not computable in a polynomial time \cite{CharikarOthers05}. This
precluded an empirical verification of the theory.

To state the relative novelty, in this paper we are glad to announce a
new stronger version of the theorem about facts and words for a
somewhat more elegant definition of inferrable facts and the PPM code,
which is computable almost in a linear time.  For the first time, we
also present two cases of the theorem: one for strongly nonergodic
processes, applying Shannon information theory, and one for general
stationary processes, applying algorithmic information theory.  Having
these results, we can supplement them finally with a rudimentary
discussion of some empirical data.

The organization of this paper is as follows.  In Section
\ref{secErgodic}, we discuss some properties of ergodic and nonergodic
processes.  In Section \ref{secStronglyNonergodic}, we define strongly
nonergodic processes and we present some examples of
them. Analogically, in Section \ref{secPerigraphic}, we discuss
perigraphic processes.  In Section \ref{secFactsWords}, we discuss two
versions of the theorem about facts and words.  In Section
\ref{secNaturalLanguage}, we discuss some empirical data and we
suppose that natural language may be a perigraphic process.  In
Section \ref{secConclusion}, we offer concluding remarks. Moreover,
three appendices follow the body of the paper. In Appendix
\ref{secFactsMI}, we prove the first part of the theorem about facts
and words.  In Appendix \ref{secMIWords}, we prove the second part of
this theorem.  In Appendix \ref{secSantaFe}, we show that that the
number of inferrable facts for the Santa Fe processes follows a power
law.

\section{Ergodic and nonergodic processes}
\label{secErgodic}

We assume that the reader is familiar with some probability measure
theory \cite{Billingsley79}.  For a real-valued random variable $Y$ on
a probability space $(\Omega,\mathcal{J},P)$, we denote its
expectation
\begin{align}
  \sred Y:=\int YdP.
\end{align}
Consider now a discrete stochastic process
$(X_i)_{i=1}^\infty=(X_1,X_2,...)$, where random variables $X_i$ take
values from a set $\mathbb{X}$ of countably many distinct symbols,
such as letters with which we write down texts in natural language. We
denote blocks of consecutive random variables $X_j^k:=(X_j,...,X_k)$
and symbols $x_j^k:=(x_j,...,x_k)$.  Let us define a binary random
variable telling whether some string $x_1^n$ has occurred in sequence
$(X_i)_{i=1}^\infty$ on positions from $i$ to $i+n-1$,
\begin{align}
  \Phi_i(x_1^n):=\boole{X_{i}^{i+n-1}=x_1^n},
\end{align}
where
\begin{align}
  \boole{\phi}
  =
  \begin{cases}
    1 & \text{if $\phi$ is true},\\
    0 & \text{if $\phi$ is false}.
  \end{cases} 
\end{align}
The expectation of this random variable, 
\begin{align}
  \sred \Phi_i(x_1^n)=P(X_{i}^{i+n-1}=x_1^n),
\end{align}
is the probability of the chosen string, whereas the arithmetic
average of consecutive random variables
$\frac{1}{m}\sum_{i=1}^m \Phi_i(x_1^n)$ is the relative frequency of the
same string in a finite sequence of random symbols $X_1^{m+n-1}$.

Process $(X_i)_{i=1}^\infty$ is called \emph{stationary} (with respect
to a probability measure $P$) if expectations $\sred \Phi_i(x_1^n)$ do
not depend on position $i$ for any string $x_1^n$. In this case, we
have the following well known theorem, which establishes that the
limiting relative frequencies of strings $x_1^n$ in infinite sequence
$(X_i)_{i=1}^\infty$ exist almost surely, i.e., with probability $1$:
\begin{theorem}[ergodic theorem, cf. e.g. \cite{Gray09}]
  For any discrete stationary process $(X_i)_{i=1}^\infty$, there
  exist limits
  \begin{align}
    \Phi(x_1^n):=\lim_{m\rightarrow\infty}
    \frac{1}{m}\sum_{i=1}^m \Phi_i(x_1^n) \text{ almost surely},
  \end{align}
  with expectations $\sred \Phi(x_1^n)=\sred \Phi_i(x_1^n)$.
\end{theorem}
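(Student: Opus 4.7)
The plan is to recognize this as a direct consequence of Birkhoff's pointwise ergodic theorem applied to the shift on the sequence space, so the bulk of the work is really just setting up the right measure-preserving framework and checking integrability.

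First I would pass from the abstract probability space $(\Omega,\mathcal{J},P)$ to the canonical sequence space $(\mathbb{X}^{\infty},\mathcal{F},\mu)$, where $\mu$ is the law of $(X_i)_{i=1}^{\infty}$ and $\mathcal{F}$ is the product $\sigma$-algebra, and introduce the shift $T\colon\mathbb{X}^{\infty}\to\mathbb{X}^{\infty}$ defined by $(T\omega)_i=\omega_{i+1}$. The stationarity assumption $\sred\Phi_i(x_1^n)=\sred\Phi_1(x_1^n)$ for all cylinder sets $\klam{x_1^n}$ extends by the usual $\pi$-$\lambda$ (or Carathéodory) argument to the statement that $\mu(T^{-1}A)=\mu(A)$ for every $A\in\mathcal{F}$, i.e., $T$ is measure-preserving. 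Next I would observe that for fixed $x_1^n$ the indicator $f(\omega):=\boole{\omega_1^n=x_1^n}$ is bounded (hence in $L^1(\mu)$) and satisfies
\begin{align}
  \Phi_i(x_1^n)=f\circ T^{i-1},
\end{align}
so $\frac{1}{m}\sum_{i=1}^m\Phi_i(x_1^n)$ is exactly a Birkhoff average of $f$ under $T$.

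Then I would invoke Birkhoff's pointwise ergodic theorem: for any measure-preserving system and any $f\in L^1$, the time averages $\frac{1}{m}\sum_{i=1}^m f\circ T^{i-1}$ converge almost surely (and in $L^1$) to the conditional expectation $\sred(f\mid\mathcal{I})$, where $\mathcal{I}$ is the $\sigma$-algebra of shift-invariant sets. This gives almost sure existence of the limit $\Phi(x_1^n)$. Finally, taking expectations on both sides, using the $L^1$ part of the theorem (or dominated convergence, since the averages are bounded by $1$), yields
\begin{align}
  \sred\Phi(x_1^n)
  =\lim_{m\to\infty}\frac{1}{m}\sum_{i=1}^m\sred\Phi_i(x_1^n)
  =\sred\Phi_1(x_1^n),
\end{align}
which is the claimed identity since by stationarity all $\sred\Phi_i(x_1^n)$ coincide.

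The only genuine obstacle is Birkhoff's theorem itself, whose standard proof via the maximal ergodic lemma is nontrivial; but here we are entitled to cite it as a black box (the paper explicitly refers to \cite{Gray09}). The setup steps — building the canonical shift, verifying measure-preservation from stationarity on cylinders, and identifying $\Phi_i(x_1^n)$ as a shift orbit of a single bounded observable — are routine. Note that we do not need ergodicity of $T$: without it the limit $\Phi(x_1^n)$ is simply a (generally random) $\mathcal{I}$-measurable function rather than a constant, which is exactly what the statement allows.
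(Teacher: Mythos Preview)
Your proposal is correct and is exactly the standard route: pass to the canonical shift, recognize $\Phi_i(x_1^n)$ as the orbit of a bounded observable, and apply Birkhoff's pointwise ergodic theorem together with dominated convergence for the expectation identity. The paper itself does not supply a proof of this statement at all---it merely states it with the citation to \cite{Gray09}---so there is no alternative argument to compare against; your outline is precisely what that reference would give.
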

In general, limits $\Phi(x_1^n)$ are random variables depending
on a particular value of infinite sequence $(X_i)_{i=1}^\infty$. It is
quite natural, however, to require that the relative frequencies of
strings $\Phi(x_1^n)$ are almost surely constants, equal to the
expectations $\sred \Phi_i(x_1^n)$.  Subsequently, process
$(X_i)_{i=1}^\infty$ will be called \emph{ergodic} (with respect to a
probability measure $P$) if limits $\Phi(x_1^n)$ are almost
surely constant for any string $x_1^n$.  The standard
definition of an ergodic process is more abstract but is equivalent to
this statement \cite[Lemma 7.15]{Gray09}.

The following examples of ergodic processes are well known:
\begin{enumerate}
\item Process $(X_i)_{i=1}^\infty$ is called \emph{IID} (independent
  identically distributed) if
  \begin{align}
    P(X_1^n=x_1^n)=\pi(x_1)...\pi(x_n).
  \end{align}
  All IID processes are ergodic.
\item Process $(X_i)_{i=1}^\infty$ is called \emph{Markov} (of order
  $1$) if
  \begin{align}
    P(X_1^n=x_1^n)=\pi(x_1)p(x_2|x_1)...p(x_n|x_{n-1}). 
  \end{align}
  A Markov process is ergodic in particular if 
  \begin{align}
    \label{MarkovMixing}
    p(x_i|x_{i-1})>c>0.
  \end{align}
  For a sufficient and necessary condition see \cite[Theorem
  7.16]{Breiman92}.
\item Process $(X_i)_{i=1}^\infty$ is called \emph{hidden Markov} if
  $X_i=g(S_i)$ for a certain Markov process $(S_i)_{i=1}^\infty$ and a
  function $g$.  A hidden Markov process is ergodic in particular if
  the underlying Markov process is ergodic.
\end{enumerate}
Whereas IID and Markov processes are some basic models in probability
theory, hidden Markov processes are of practical importance in
computational linguistics \cite{Jelinek97,ManningSchutze99}.  Hidden
Markov processes as considered there usually satisfy condition
(\ref{MarkovMixing}) and therefore they are ergodic.

Let us call a probability measure $P$ stationary or ergodic,
respectively, if the process $(X_i)_{i=1}^\infty$ is stationary or
ergodic with with respect to the measure $P$. Suppose that we have a
stationary measure $P$ which generates some data
$(X_i)_{i=1}^\infty$. We can define a new random measure $F$ equal to
the relative frequencies of blocks in the data $(X_i)_{i=1}^\infty$.
It turns out that the measure $F$ is almost surely ergodic. Formally,
we have this proposition.
\begin{theorem}[\mbox{cf. \cite[Theorem 9.10]{Kallenberg97}}]
  \label{theoPreErgodicDecomp}
  Any process $(X_i)_{i=1}^\infty$ with a stationary measure $P$ is
  almost surely ergodic with respect to the random measure $F$ given
  by
  \begin{align}
    F(X_1^n=x_1^n):=\Phi(x_1^n).  
  \end{align}
\end{theorem}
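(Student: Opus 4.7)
The plan is to establish that the random measure $F$ is almost surely a well-defined stationary probability measure on $\mathbb{X}^\infty$ and then that it is ergodic. The cleanest route to ergodicity is via the identification of $F$ with the regular conditional probability $P(\cdot\mid\mathcal{I})$ given the shift-invariant $\sigma$-algebra $\mathcal{I}$; once this identification is made, ergodicity of $F(\omega)$ follows from the standard zero-one property of conditional probability on invariant events.

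First I would verify that $F$ is almost surely a probability measure. The pre-limit identity $\sum_{x_{n+1}\in\mathbb{X}}\Phi_i(x_1^n x_{n+1})=\Phi_i(x_1^n)$ holds pointwise. Averaging in $i$, letting $m\to\infty$, and exchanging limit and sum by dominated convergence (the partial sums are bounded by $1$) gives the marginal consistency $\sum_{x_{n+1}}\Phi(x_1^n x_{n+1})=\Phi(x_1^n)$ almost surely for each fixed string, hence on a single $P$-null exceptional set for all countably many strings simultaneously. Kolmogorov's extension theorem then furnishes $F$. The parallel identity $\sum_{x_0}\Phi_i(x_0 x_1^n)=\Phi_{i+1}(x_1^n)$, Cesaro-averaged, yields $\sum_{x_0}\Phi(x_0 x_1^n)=\Phi(x_1^n)$, which is precisely shift-invariance of $F$.

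Next I would identify $F$ with the regular conditional probability given $\mathcal{I}$. Applied to the indicators $\boole{X_1^n=x_1^n}$, Birkhoff's ergodic theorem gives $\Phi(x_1^n)=P(X_1^n=x_1^n\mid\mathcal{I})$ almost surely. So $F(\omega)$ and $P(\cdot\mid\mathcal{I})(\omega)$ agree on the $\pi$-system of cylinders, and a monotone class argument promotes the agreement to the full product $\sigma$-algebra. Ergodicity of $F(\omega)$ is then immediate: for any $A\in\mathcal{I}$, the defining property of conditional probability gives $P(A\mid\mathcal{I})=\boole{\omega\in A}$ almost surely, so $F(\omega)(A)\in\{0,1\}$ for $P$-a.s. $\omega$. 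To match the frequency formulation of ergodicity used in the excerpt, apply the ergodic theorem under $F(\omega)$: the almost sure limit of the empirical frequencies of $x_1^n$ is the $\mathcal{I}$-measurable function $\Phi(x_1^n)$, which by the zero-one property just noted is $F(\omega)$-a.s. equal to its value at $\omega$, namely the constant $\Phi(x_1^n)(\omega)=F(\omega)(X_1^n=x_1^n)$.

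The main obstacle is the self-referential appearance of $F$ in its own definition: we need to argue that $F(\omega)$-almost every sample path has the same asymptotic frequencies as $\omega$ itself. The identification with $P(\cdot\mid\mathcal{I})$ sidesteps this by reducing the question to the triviality of $\mathcal{I}$ under the conditional measure, a classical fact. Minor bookkeeping concerns are the countably infinite alphabet, which forces dominated convergence when swapping $\lim_{m\to\infty}$ with $\sum_{x\in\mathbb{X}}$, and the need to collect the countably many almost-sure consistency and invariance identities into a single $P$-null exceptional set.
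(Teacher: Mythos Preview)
The paper does not supply its own proof of this theorem; it merely cites it as \cite[Theorem 9.10]{Kallenberg97}. Your sketch is essentially the classical argument behind that reference: build $F$ from the cylinder frequencies via Kolmogorov extension, identify $F$ with the regular conditional probability $P(\cdot\mid\mathcal{I})$ through Birkhoff's theorem, and then read off ergodicity from the zero--one behaviour of $P(A\mid\mathcal{I})$ on invariant sets. So there is nothing to compare approach-wise; you are reconstructing the textbook proof that the paper takes for granted.

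One point worth tightening in your last paragraph: when you apply the ergodic theorem \emph{under} $F(\omega)$ and claim that the limit of the empirical frequencies is again $\Phi(x_1^n)$, this is not automatic from Birkhoff applied to $F(\omega)$ alone. What makes it work is that the convergence event $\klam{\frac{1}{m}\sum_{i=1}^m\Phi_i(x_1^n)\to\Phi(x_1^n)}$ is itself shift-invariant and has $P$-probability $1$, hence $P(\cdot\mid\mathcal{I})$-probability $1$ for $P$-a.e.\ $\omega$; combined with the fact that the $\mathcal{I}$-measurable $\Phi(x_1^n)$ is $F(\omega)$-a.s.\ equal to the constant $\Phi(x_1^n)(\omega)$, the conclusion follows. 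You allude to this but do not quite spell it out, and it is exactly the place where the ``self-referential'' worry you flag gets resolved.
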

Moreover, from the random measure $F$ we can obtain the stationary
measure $P$ by integration, $P(X_1^n=x_1^n)=\sred F(X_1^n=x_1^n)$. The
following result asserts that this integral representation of measure
$P$ is unique.
\begin{theorem}[\mbox{ergodic decomposition, cf. \cite[Theorem 9.12]{Kallenberg97}}]
  Any stationary probability measure $P$ can be represented as
  \begin{align}
    P(X_1^n=x_1^n)=\int F(X_1^n=x_1^n)d\nu(F),
  \end{align}
  where $\nu$ is a unique measure on stationary ergodic measures.
\end{theorem}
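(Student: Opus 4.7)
The plan is to deduce both existence and uniqueness from Theorem \ref{theoPreErgodicDecomp}, which provides the random ergodic measure $F$ defined through the almost sure limits $\Phi(x_1^n)$. To begin, I would equip the space of stationary probability measures on $\mathbb{X}^\infty$ with the $\sigma$-algebra generated by the evaluation maps $F\mapsto F(X_1^n=x_1^n)$, and view $\omega\mapsto F(\omega)$, defined by $F(\omega)(X_1^n=x_1^n):=\Phi(x_1^n)(\omega)$, as a measurable map from $(\Omega,\mathcal{J})$ into that space. Theorem \ref{theoPreErgodicDecomp} then says that this map almost surely takes values in the set of stationary ergodic measures, so its pushforward $\nu:=P\circ F^{-1}$ is concentrated on stationary ergodic measures. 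Combining the ergodic theorem identity $\sred \Phi(x_1^n)=\sred \Phi_i(x_1^n)=P(X_1^n=x_1^n)$ with the change of variables $\sred \Phi(x_1^n)=\int F(X_1^n=x_1^n)\,d\nu(F)$ yields the integral representation, establishing existence.

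For uniqueness, the crucial observation is that whenever $F$ is itself stationary and ergodic, the limit $\Phi(x_1^n)$ is $F$-almost surely equal to the constant $F(X_1^n=x_1^n)$: its expectation under $F$ is $F(X_1^n=x_1^n)$, and by ergodicity of $F$ it is almost surely constant. Consequently, if $P=\int F\,d\nu(F)$ is any mixture with $\nu$ supported on stationary ergodic measures, then the joint $P$-distribution of the whole family $(\Phi(x_1^n))_{n\ge 1,\,x_1^n\in\mathbb{X}^n}$ coincides with the pushforward of $\nu$ along the evaluation map $\Psi:F\mapsto (F(X_1^n=x_1^n))_{n,x_1^n}$. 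Since a stationary measure on $\mathbb{X}^\infty$ is uniquely determined by its finite-dimensional marginals, $\Psi$ is injective, so $\nu$ is recovered from $P$ as the distribution of $\Psi^{-1}$ applied to the empirical frequencies and is thus unique.

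The substantive probabilistic content is light: everything reduces to the two facts that $\sred \Phi(x_1^n)=P(X_1^n=x_1^n)$ and that $\Phi(x_1^n)=F(X_1^n=x_1^n)$ $F$-almost surely whenever $F$ is stationary ergodic. I expect the main obstacle to be purely measure-theoretic bookkeeping: checking that the set of stationary (respectively ergodic) probability measures carries a standard Borel structure in which $\omega\mapsto F(\omega)$ is measurable, verifying that Fubini applies to the iterated integral $\int\!\int \boole{X_1^n=x_1^n}\,dF\,d\nu(F)$, and ensuring that the a.s. constant value of $\Phi(x_1^n)$ under ergodic $F$ is genuinely identified with $F(X_1^n=x_1^n)$. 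These are classical points handled, for example, in the section of Kallenberg cited next to the theorem, so in a final write-up I would simply import the corresponding lemmas and present the two-line calculations above.
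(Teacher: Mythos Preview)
The paper does not supply its own proof of this theorem: it merely cites \cite[Theorem 9.12]{Kallenberg97} and moves on, so there is nothing to compare your argument against in the text itself. Your sketch is the standard route---take $\nu$ as the law of the random ergodic measure $F$ furnished by Theorem~\ref{theoPreErgodicDecomp}, read off existence from $\sred\Phi(x_1^n)=P(X_1^n=x_1^n)$, and recover uniqueness from the fact that under any ergodic component the empirical frequencies are deterministic and identify that component. This is essentially how Kallenberg proceeds, and your list of measure-theoretic checks (measurability of $\omega\mapsto F(\omega)$, Borel structure on the set of ergodic measures, Fubini) is exactly the bookkeeping one has to import from that reference.
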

In other words, stationary ergodic measures are some building blocks
from which we can construct any stationary measure. For a stationary
probability measure $P$, the particular values of the random ergodic
measure $F$ are called the ergodic components of measure $P$.

Consider for instance, a Bernoulli($\theta$) process with
measure
\begin{align}
  \label{Bernoulli}
  F_\theta(X_1^n=x_1^n)=\theta^{\sum_{i=1}^n
    x_i}(1-\theta)^{n-\sum_{i=1}^n x_i},
\end{align}
where $x_i\in\klam{0,1}$ and $\theta\in[0,1]$.  This measure will be
contrasted with the measure of a mixture Bernoulli process with 
parameter $\theta$ uniformly distributed on interval $[0,1]$,
\begin{align}
  \label{MixtureBernoulli}
  P(X_1^n=x_1^n)&=\int_0^1
  F_\theta(X_1^n=x_1^n)d\theta
  \nonumber\\
  &=\frac{1}{n+1}\kwad{\binom{n}{\sum_{i=1}^n x_i}}^{-1}.
\end{align}
Measure (\ref{Bernoulli}) is a measure of an IID process and is
therefore ergodic, whereas measure (\ref{MixtureBernoulli}) is a
mixture of ergodic measures and hence it is nonergodic.

\section{Strongly nonergodic processes}
\label{secStronglyNonergodic}

According to our definition, a process is ergodic when the relative
frequencies of any strings in a random sample in the long run converge
to some constants. Consider now the following thought
experiment. Suppose that we select a random book from a library.
Counting the relative frequencies of keywords, such as
\emph{bijection} for a text in mathematics and \emph{fossil} for a
text in paleontology, we can effectively recognize the topic of the
book. Simply put, the relative frequencies of some keywords will be
higher for books concerning some topics whereas they will be lower for
books concerning other topics. Hence, in our thought experiment, we
expect that the relative frequencies of keywords are some random
variables with values depending on the particular topic of the
randomly selected book. Since keywords are some particular strings, we
may conclude that the stochastic process that models natural language
should be nonergodic.
  
The above thought experiment provides another perspective onto
nonergodic processes. According to the following theorem, a process is
nonergodic when we can effectively distinguish in the limit at least
two random topics in it. In the statement, function
$f:\mathbb{X}^*\rightarrow\klam{0,1,2}$ assumes values $0$ or $1$ when
we can identify the topic, whereas it takes value $2$ when we are not
certain which topic a given text is about.
\begin{theorem}[cf. \cite{Debowski09}]
  \label{theoKnowability}
  A stationary discrete process $(X_i)_{i=1}^\infty$ is nonergodic if
  and only if there exists a function
  $f:\mathbb{X}^*\rightarrow\klam{0,1,2}$ and a binary random variable
  $Z$ such that $0<P(Z=0)<1$ and
  \begin{align}
    \label{Knowability}
    \lim_{n\rightarrow\infty} P(f(X_{i}^{i+n-1})=Z)=1
  \end{align}
  for any position $i\in\mathbb{N}$.
\end{theorem}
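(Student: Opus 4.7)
The plan is to prove both implications by leveraging the ergodic theorem for the forward direction and the triviality of the shift-invariant $\sigma$-algebra of an ergodic measure for the reverse direction.

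For the forward implication, assume the process is nonergodic. Then there exists a finite string $w\in\mathbb{X}^*$ for which $\Phi(w)$ is not almost surely constant, so its distribution function takes at least two distinct values; since it has at most countably many atoms, pick a continuity point $c$ strictly between two points of its support, giving $0<P(\Phi(w)>c)<1$ and $P(\Phi(w)=c)=0$. Set $Z:=\boole{\Phi(w)>c}$ and, writing $\hat\Phi_m(w;x_1^m)$ for the empirical frequency of occurrences of $w$ in $x_1^m$, define
\begin{align*}
  f(x_1^m):=\boole{\hat\Phi_m(w;x_1^m)>c}.
\end{align*}
By the ergodic theorem applied at each starting position $i$, the averages $\hat\Phi_n(w;X_i^{i+n-1})$ tend to $\Phi(w)$ almost surely; since $c$ is a continuity point of the distribution of $\Phi(w)$, the thresholded indicator $f(X_i^{i+n-1})$ converges almost surely, hence in probability, to $Z$, which is (\ref{Knowability}).

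For the reverse implication, I argue contrapositively: such a pair $(f,Z)$ cannot exist when the process is ergodic. Let $T$ denote the shift on the sequence space. Stationarity means $T$ preserves $P$, so convergence in probability is preserved under precomposition by $T$: from $f(X_1^n)\to Z$ in probability one infers $f(X_1^n)\circ T\to Z\circ T$ in probability. But $f(X_1^n)\circ T=f(X_2^{n+1})$, and by the hypothesis at $i=2$ this same sequence converges in probability to $Z$. Uniqueness of limits in probability then yields $Z\circ T=Z$ almost surely, so $Z$ is shift-invariant. For an ergodic stationary measure every shift-invariant random variable is almost surely constant, contradicting $0<P(Z=0)<1$.

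The main conceptual hinge is the shift-invariance of $Z$ in the reverse direction: because the same $Z$ serves as the limit in probability for \emph{every} starting position $i$, not merely a single one, $Z$ must agree with its shift $Z\circ T$ almost surely, whence ergodicity collapses $Z$ to a constant. In the forward direction, the main care-point is choosing the threshold $c$ at a continuity point of the distribution of $\Phi(w)$, so that almost-sure convergence of empirical frequencies upgrades to almost-sure convergence of their indicators.
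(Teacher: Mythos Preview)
Your forward implication is essentially identical to the paper's: both pick a string whose limiting frequency $\Phi(w)$ is nonconstant, choose a threshold $c$ (the paper's $y$) that is not an atom of the law of $\Phi(w)$ and separates its support, set $Z=\boole{\Phi(w)>c}$, and define $f$ by thresholding the empirical frequency; almost-sure convergence of the averages then upgrades to almost-sure convergence of the indicators and dominated convergence finishes. The paper phrases the last step via $\sred[Z_{in}Z+(1-Z_{in})(1-Z)]$, you phrase it as ``a.s.\ convergence implies convergence in probability,'' which is the same thing.

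For the reverse implication the paper does not give a self-contained argument but defers to the measure-theoretic machinery of Theorem~9 in \cite{Debowski09} (the characterization via the shift-invariant $\sigma$-field). Your contrapositive argument is more elementary and is correct: stationarity makes the shift $T$ measure-preserving, so $f(X_1^n)\to Z$ in probability implies $f(X_2^{n+1})=f(X_1^n)\circ T\to Z\circ T$ in probability, while the hypothesis at $i=2$ gives the limit $Z$, whence $Z\circ T=Z$ a.s.\ and ergodicity forces $Z$ to be constant. The only point worth making explicit is that you are tacitly working on a space where the shift $T$ acts and carries $Z$ along; this is harmless because $Z$, being the in-probability limit of the $\sigma(X_1^\infty)$-measurable variables $f(X_1^n)$, is a.s.\ $\sigma(X_1^\infty)$-measurable, so one may pass to the canonical sequence space $\mathbb{X}^\infty$ where $T$ is the coordinate shift. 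With that caveat your argument is complete and arguably cleaner than the route the paper alludes to.
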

A binary variable $Z$ satisfying condition (\ref{Knowability}) will be
called a \emph{probabilistic fact}.  A probabilistic fact tells which
of two topics the infinite text generated by the stationary process is
about. It is a kind of a random switch which is preset before we start
scanning the infinite text, compare a similar wording in
\cite{GrayDavisson74b}. To keep the proofs simple, here we only give a
new elementary proof of the ``$\implies$'' statement of Theorem
\ref{theoKnowability}. The proof of the ``$\impliedby$'' part applies
some measure theory and follows the idea of Theorem 9 from
\cite{Debowski09} for strongly nonergodic processes, which we will
discuss in the next paragraph.
\begin{proof} (only $\implies$) Suppose that process
  $(X_i)_{i=1}^\infty$ is nonergodic. Then there exists a string
  $x_1^k$ such that $\Phi\neq \sred \Phi$ for $\Phi:=\Phi(x_1^k)$ with
  some positive probability. Hence there exists a real number $y$ such
  that $P(\Phi=y)=0$ and
\begin{align}
  \label{Separation}
  P(\Phi>y)=1-P(\Phi<y)\in(0,1)
  .
\end{align}
Define $Z:=\boole{\Phi>y}$ and
$f(X_{i}^{i+n-1}):=Z_{in}:=\boole{\Phi_{in}>y}$, where
\begin{align}
  \Phi_{in}:=\frac{1}{n-k+1}\sum_{j=i}^{i+n-k}
  \Phi_j(x_1^k)
  .
\end{align}
Since $\lim_{n\rightarrow\infty} \Phi_{in}=\Phi$ almost surely and
$\Phi$ satisfies (\ref{Separation}), convergence
$\lim_{n\rightarrow\infty} Z_{in}=Z$ also holds almost
surely. Applying the Lebesgue dominated convergence theorem we obtain
\begin{align}
 \lim_{n\rightarrow\infty}
 P(f(X_{i}^{i+n-1})=Z)
 &= \lim_{n\rightarrow\infty}
 \sred\kwad{Z_{in}Z+(1-Z_{in})(1-Z)}
 \nonumber\\
 &=\sred\kwad{Z^2+(1-Z)^2}=1 
 .
\end{align}
\end{proof}

As for books in natural language, we may have an intuition that the
pool of available book topics is extremely large and contains many
more topics than just two. For this reason, we may need not a single
probabilistic fact $Z$ but rather a sequence of probabilistic facts
$Z_1,Z_2,...$ to specify the topic of a book completely.  Formally,
stationary processes requiring an infinite sequence of independent
uniformly distributed probabilistic facts to describe the topic of an
infinitely long text will be called strongly nonergodic.
\begin{definition}[cf. \cite{Debowski09,Debowski11b}]
  A stationary discrete process $(X_i)_{i=1}^\infty$ is called
  \emph{strongly nonergodic} if there exist a function
  $g:\mathbb{N}\times\mathbb{X}^*\rightarrow\klam{0,1,2}$ and a~binary
  IID process $(Z_k)_{k=1}^\infty$ such that $P(Z_k=0)=P(Z_k=1)=1/2$
  and
   \begin{align}
     \label{StrongKnowability}
     \lim_{n\rightarrow\infty} P(g(k;X_{i}^{i+n-1})=Z_k)=1
   \end{align}
   for any position $i\in\mathbb{N}$ and any index $k\in\mathbb{N}$.
 \end{definition}
 As we have stated above, for a strongly nonergodic process, there is
 an infinite number of independent probabilistic facts
 $(Z_k)_{k=1}^\infty$ with a uniform distribution on the set
 $\klam{0,1}$. Formally, these probabilistic facts can be assembled into a
 single real random variable $T=\sum_{k=1}^\infty 2^{-k} Z_k$, which
 is uniformly distributed on the unit interval $[0,1]$. The value of
 variable $T$ identifies the topic of a random infinite text generated
 by the stationary process.  Thus for a strongly nonergodic process,
 we have a continuum of available topics which can be incrementally
 identified from any sufficiently long text. Put formally, according
 to Theorem 9 from \cite{Debowski09} a stationary process is strongly
 nonergodic if and only if its shift-invariant $\sigma$-field contains
 a nonatomic sub-$\sigma$-field. We note in passing that in
 \cite{Debowski09} strongly nonergodic processes were called
 \emph{uncountable description processes}.

 In view of Theorem 9 from \cite{Debowski09}, the mixture Bernoulli
 process (\ref{MixtureBernoulli}) is some example of a strongly
 nonergodic process. In this case, the parameter $\theta$ plays the
 role of the random variable $T=\sum_{k=1}^\infty 2^{-k} Z_k$.
 Showing that condition (\ref{StrongKnowability}) is satisfied for
 this process in an elementary fashion is a tedious exercise. Hence
 let us present now a simpler guiding example of a strongly nonergodic
 process, which we introduced in \cite{Debowski09,Debowski11b} and
 called the Santa Fe process.  Let $(Z_k)_{k=1}^\infty$ be a binary
 IID process with $P(Z_k=0)=P(Z_k=1)=1/2$.  Let $(K_i)_{i=1}^\infty$
 be an IID process with $K_i$ assuming values in natural numbers with
 a power-law distribution
\begin{align}
  P(K_i=k)\propto \frac{1}{k^\alpha}, \quad \alpha>1.
\end{align}
The \emph{Santa Fe process} with exponent $\alpha$ is a sequence
$(X_i)_{i=1}^\infty$, where
\begin{align}
  X_i=(K_i,Z_{K_i})
\end{align}
are pairs of a random number $K_i$ and the corresponding probabilistic fact
$Z_{K_i}$.  The Santa Fe process is strongly nonergodic since condition
(\ref{StrongKnowability}) holds for example for
\begin{align}
  \label{SantaFePredictor}
  g(k;x_1^n)
  =
  \begin{cases}
    0 & \text{if for all $1\le i\le n$, $x_i=(k,z)\implies x_i=(k,0)$},
    \\
    1 & \text{if for all $1\le i\le n$, $x_i=(k,z)\implies x_i=(k,1)$},
    \\
    2 & \text{else}.
  \end{cases}
\end{align}
Simply speaking, function $g(k;\cdot)$ returns $0$ or $1$ when an
unambiguous value of the second constituent can be read off from pairs
$x_i=(k,\cdot)$ and returns $2$ when there is some
ambiguity. Condition (\ref{StrongKnowability}) is satisfied since
\begin{align}
  P(g(k;X_{i}^{i+n-1})=Z_k)&=P(\text{$K_i=k$ for some $1\le i\le
    n$})
  \nonumber\\
  &=1-(1-P(K_i=k))^n\xrightarrow[n\rightarrow\infty]{} 1.
\end{align}

Some salient property of the Santa Fe process is the power law growth
of the expected number of probabilistic facts which can be inferred
from a finite text drawn from the process.  Consider a strongly
nonergodic process $(X_i)_{i=1}^\infty$. The set of initial
independent probabilistic facts inferrable from a finite text $X_1^n$
will be defined as
 \begin{align}
   U(X_1^n):=\klam{l\in\mathbb{N}: g(k;X_1^n)=Z_k \text{ for all
       $k\le l$}}.
 \end{align}
 In other words, we have $U(X_1^n)=\klam{1,2,...,l}$ where $l$ is the
 largest number such that $g(k;X_1^n)=Z_k$ for all $k\le l$.
 To capture the power-law growth of an arbitrary function
 $s:\mathbb{N}\rightarrow\mathbb{R}$, we will denote the Hilberg
 exponent defined
\begin{align}
  \hilberg_{n\rightarrow\infty} s(n):=\limsup_{n\rightarrow\infty}
  \frac{\log^+ s(2^n)}{\log 2^n},
\end{align}
where $\log^+ x:=\log(x+1)$ for $x\ge 0$ and $\log^+ x:=0$ for $x<0$,
cf.\ \cite{Debowski15d}.  In contrast to paper \cite{Debowski15d}, for
technical reasons, we define the Hilberg exponent only for an
exponentially sparse subsequence of terms $s(2^n)$ rather than all terms
$s(n)$. Moreover, in \cite{Debowski15d}, the Hilberg exponent was
considered only for mutual information
$s(n)=\mathbb{I}(X_1^n;X_{n+1}^{2n})$, defined later in equation
(\ref{PMI}). We observe that for the exact power law growth
$s(n)=n^\beta$ with $\beta\ge 0$ we have
$\hilberg_{n\rightarrow\infty} s(n)=\beta$. More generally, the
Hilberg exponent captures an asymptotic power-law growth of the
sequence.  As shown in Appendix \ref{secSantaFe}, for the Santa Fe
process with exponent $\alpha$ we have the asymptotic power-law growth
\begin{align}
  \label{FactsExpI}
  \hilberg_{n\rightarrow\infty} \sred\card U(X_1^n) 
  =
  1/\alpha\in(0,1)
  .
\end{align}
This property distinguishes the Santa Fe process from the mixture
Bernoulli process (\ref{MixtureBernoulli}), for which the respective
Hilberg exponent is zero, as we discuss in Section
\ref{secNaturalLanguage}.

\section{Perigraphic processes}
\label{secPerigraphic}

Is it possible to demonstrate by a statistical investigation of texts
that natural language is really strongly nonergodic and satisfies a
condition similar to (\ref{FactsExpI})? In the thought experiment
described in the beginning of the previous section we have ignored the
issue of constructing an infinitely long text. In reality, every book
with a well defined topic is finite. If we want to obtain an unbounded
collection of texts, we need to assemble a corpus of different books
and it depends on our assembling criteria whether the books in the
corpus will concern some persistent random topic.  Moreover, if we
already have a \emph{single} infinite sequence of books generated by
some stationary source and we estimate probabilities as relative
frequencies of blocks of symbols in this sequence then by Theorem
\ref{theoPreErgodicDecomp} we will obtain an ergodic probability
measure almost surely.

In this situation we may ask whether the idea of the power-law growth
of the number of inferrable probabilistic facts can be translated
somehow to the case of ergodic measures.  Some straightforward method
to apply is to replace the sequence of independent uniformly
distributed probabilistic facts $(Z_k)_{k=1}^\infty$, being random
variables, with an algorithmically random sequence of particular
binary digits $(z_k)_{k=1}^\infty$. Such digits $z_k$ will be called
\emph{algorithmic facts} in contrast to variables $Z_k$ being called
\emph{probabilistic facts}.

Let us recall some basic concepts.  For a discrete random
variable $X$, let $P(X)$ denote the random variable that takes value
$P(X=x)$ when $X$ takes value $x$. We will introduce the pointwise
entropy
\begin{align}
  \label{PEntropy}
  \mathbb{H}(X):=-\log P(X)
  ,
\end{align}
where $\log$ stands for the natural logarithm.  The prefix-free
Kolmogorov complexity $K(u)$ of a string $u$ is the length of the
shortest self-delimiting program written in binary digits that prints
out string $u$ \cite[Chapter 3]{LiVitanyi08}. $K(u)$ is the founding
concept of the algorithmic information theory and is an analogue of
the pointwise entropy. To keep our notation analogical to
(\ref{PEntropy}), we will write the algorithmic entropy
\begin{align}
  \mathbb{H}_a(u):=K(u)\log 2
  .
\end{align}

If the probability measure is computable then the algorithmic entropy
is close to the pointwise entropy. On the one hand, by the
Shannon-Fano coding for a computable probability measure, the
algorithmic entropy is less than the pointwise entropy plus a constant
which depends on the probability measure and the dimensionality of the
distribution \cite[Corollary 4.3.1]{LiVitanyi08}. Formally,
\begin{align}
  \label{ShannonFano}
  \mathbb{H}_a(X_1^n)\le \mathbb{H}(X_1^n)+2\log n+C_P,
\end{align}
where $C_P\ge 0$ is a certain constant depending on the probability
measure $P$. On the other hand, since the prefix-free Kolmogorov
complexity is also the length of a prefix-free code, we have
\begin{align}
  \label{SourceCoding}
  \sred\mathbb{H}_a(X_1^n)\ge \sred\mathbb{H}(X_1^n)
  .
\end{align}
It is also true that $\mathbb{H}_a(X_1^n)\ge \mathbb{H}(X_1^n)$ for
sufficiently large $n$ almost surely \cite[Theorem 3.1]{Barron85b}.
Thus we have shown that the algorithmic entropy is in some sense close
to the pointwise entropy, for a computable probability measure.

Next, we will discuss the difference between probabilistic and
algorithmic randomness. Whereas for an IID sequence of random
variables $(Z_k)_{k=1}^\infty$ with $P(Z_k=0)=P(Z_k=1)=1/2$ we have
\begin{align}
  \mathbb{H}(Z_1^k)=k\log 2
  ,
\end{align}
similarly an infinite sequence of binary digits $(z_k)_{k=1}^\infty$
is called algorithmically random (in the Martin-L\"of sense) when
there exists a constant $C\ge 0$ such that
\begin{align}
  \mathbb{H}_a(z_1^k)\ge k\log 2-C
\end{align}
for all $k\in\mathbb{N}$ \cite[Theorem 3.6.1]{LiVitanyi08}.  The
probability that the aforementioned sequence of random variables
$(Z_k)_{k=1}^\infty$ is algorithmically random equals $1$---for example
by \cite[Theorem 3.1]{Barron85b}, so algorithmically random sequences
are typical realizations of sequence $(Z_k)_{k=1}^\infty$.

Let $(X_i)_{i=1}^\infty$ be a stationary process.  We observe that
generalizing condition (\ref{StrongKnowability}) in an algorithmic
fashion does not make much sense. Namely, condition
\begin{align}
  \label{StrongKnowabilityAlg}
  \lim_{n\rightarrow\infty} P(g(k;X_{i}^{i+n-1})=z_k)=1
\end{align}
is trivially satisfied for any stationary process for a certain
computable function
$g:\mathbb{N}\times\mathbb{X}^*\rightarrow\klam{0,1,2}$ and an
algorithmically random sequence $(z_k)_{k=1}^\infty$. It turns out so
since there exists a computable function
$\omega:\mathbb{N}\times\mathbb{N}\rightarrow\klam{0,1}$ such that
$\lim_{n\rightarrow\infty}\omega(k;n)=\Omega_k$, where
$(\Omega_k)_{k=1}^\infty$ is the binary expansion of the halting
probability $\Omega=\sum_{k=1}^\infty 2^{-k}\Omega_k$, which is a
lower semi-computable algorithmically random sequence \cite[Section
3.6.2]{LiVitanyi08}.

In spite of this negative result, the power-law growth of the number
of inferrable algorithmic facts corresponds to some nontrivial
property. For a computable function
$g:\mathbb{N}\times\mathbb{X}^*\rightarrow\klam{0,1,2}$ and an
algorithmically random sequence of binary digits $(z_k)_{k=1}^\infty$,
which we will call \emph{algorithmic facts}, the set of initial
algorithmic facts inferrable from a finite text $X_1^n$ will be
defined as
\begin{align}
  U_a(X_1^n):=\klam{l\in\mathbb{N}: g(k;X_1^n)=z_k \text{ for all
  $k\le l$}}
  .
\end{align}
Subsequently, we will call a process perigraphic if the expected
number of algorithmic facts which can be inferred from a finite text
sampled from the process grows asymptotically like a power of the text
length.
\begin{definition}
  A stationary discrete process $(X_i)_{i=1}^\infty$ is called
  \emph{perigraphic} if 
  \begin{align}
    \label{Perigraphic}
    \hilberg_{n\rightarrow\infty} \sred\card U_a(X_1^n) >0
 \end{align}
 for some computable function
 $g:\mathbb{N}\times\mathbb{X}^*\rightarrow\klam{0,1,2}$ and an
 algorithmically random sequence of binary digits
 $(z_k)_{k=1}^\infty$.
\end{definition}
Perigraphic processes can be ergodic.  The proof of Theorem
\ref{theoFacts} from Appendix \ref{secSantaFe} can be easily adapted
to show that some example of a perigraphic process is the Santa Fe
process with sequence $(Z_k)_{k=1}^\infty$ replaced by an
algorithmically random sequence of binary digits
$(z_k)_{k=1}^\infty$. This process is IID and hence ergodic.

We can also easily show the following proposition.
 \begin{theorem}
   Any perigraphic process $(X_i)_{i=1}^\infty$ has an uncomputable
   measure $P$.
 \end{theorem}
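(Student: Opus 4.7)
The plan is a proof by contrapositive: assuming $P$ computable, I would derive a violation of the Martin-L\"of randomness of the sequence of algorithmic facts $(z_k)_{k\ge 1}$ witnessing the perigraphic property.

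First, for any realization $x_1^n$ of $X_1^n$, I would observe that the prefix $z_1^{L(x_1^n)}$, with $L(x_1^n):=\card U_a(x_1^n)$, is recovered by evaluating the computable function $g$: by definition of $U_a$, the tuple $(g(1;x_1^n),\ldots,g(L(x_1^n);x_1^n))$ equals $z_1^{L(x_1^n)}$. This would yield the conditional Kolmogorov-complexity bound $K(z_1^{L}\mid L)\le K(x_1^n)+O(1)$. Combining with the Shannon--Fano bound (\ref{ShannonFano}) valid for computable $P$, namely $K(x_1^n)\log 2\le \mathbb{H}(x_1^n)+2\log n+C_P$, and with the algorithmic-randomness lower bound $K(z_1^L\mid L)\log 2\ge L\log 2-O(1)$, I would deduce the pointwise inequality
\begin{align*}
L(x_1^n)\log 2\le \mathbb{H}(x_1^n)+2\log n+O(1),
\end{align*}
whose expectation reads $(\log 2)\,\sred \card U_a(X_1^n)\le H(X_1^n)+O(\log n)$, where $H(X_1^n):=\sred\mathbb{H}(X_1^n)$.

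The main obstacle is that this crude estimate only gives $\hilberg \sred \card U_a(X_1^n)\le \hilberg H(X_1^n)\le 1$, which is consistent with any perigraphic exponent in $(0,1]$. To close the argument I would sharpen the extraction by converting the computable pair $(P,g)$ into a direct computable predictor of $(z_k)$: define $\hat z(k):=\arg\max_{b\in\{0,1\}}P(g(k;X_1^{n_k})=b)$ for a computably chosen sequence $n_k$. The perigraphic growth together with a concentration argument based on stationarity of $(X_i)$ should force $\hat z(k)=z_k$ along a computable infinite set of indices; preservation of Martin-L\"of randomness under computable selection rules then yields the required contradiction with algorithmic randomness of $(z_k)$. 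The hardest step is this concentration statement, which must turn the averaged lower bound $\sum_k P(\card U_a(X_1^n)\ge k)\ge cn^\beta$ into a pointwise-in-$k$ success probability strictly larger than $1/2$ for infinitely many $k$.
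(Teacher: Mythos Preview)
Your first paragraph is fine and you correctly diagnose its weakness: the bound $(\log 2)\,\sred\card U_a(X_1^n)\le \sred\mathbb{H}(X_1^n)+O(\log n)$ only yields a Hilberg exponent $\le 1$. However, your proposed remedy via a concentration argument is a genuine gap. You never prove the ``concentration statement,'' and in fact the perigraphic hypothesis does not force it: the growth $\sred\card U_a(X_1^n)=\sum_{k\ge 1}P(\card U_a(X_1^n)\ge k)\gtrsim n^\beta$ is perfectly compatible with every individual term $P(\card U_a(X_1^n)\ge k)$ staying below $1/2$ (e.g.\ a distribution concentrated on $\{0,\lfloor n^\beta/p\rfloor\}$ with mass $p<1/2$ on the large value). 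Without that concentration you cannot guarantee that the computable predictor $\hat z(k)$ ever equals $z_k$, so the appeal to preservation of randomness under computable selection has nothing to act on.

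The paper's proof avoids this detour entirely by sharpening your first inequality rather than abandoning it. The missing idea is to subtract off the entropy: in addition to the upper bound $\sred\mathbb{H}_a(X_1^n)\le \sred\mathbb{H}(X_1^n)+O(\log n)$ from Shannon--Fano for computable $P$, one uses the \emph{lower} bound $\sred\mathbb{H}_a(X_1^n\mid z_1^\infty)\ge \sred\mathbb{H}(X_1^n)$, which holds because conditional prefix-free complexity with a fixed second argument is still the length of a prefix-free code. Combining these two, the paper shows (essentially by your own first-paragraph computation, but run through $\mathbb{H}_a(X_1^n)-\mathbb{H}_a(X_1^n\mid z_1^\infty)$ rather than through $\mathbb{H}_a(X_1^n)$ alone) that
\[
(\log 2)\,\sred\card U_a(X_1^n)\;\le\;\sred\bigl[\mathbb{H}_a(X_1^n)-\mathbb{H}(X_1^n)\bigr]+O\bigl(\log\sred\card U_a(X_1^n)\bigr),
\]
and the right-hand side is $O(\log n)$ when $P$ is computable. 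This immediately gives Hilberg exponent $0$. In short, you were one inequality away: bound $\sred\card U_a$ by the \emph{redundancy} $\sred\mathbb{H}_a-\sred\mathbb{H}$, not by $\sred\mathbb{H}$ itself.
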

 \begin{proof}
   Assume that a perigraphic process $(X_i)_{i=1}^\infty$ has a
   computable measure $P$.  By the proof of Theorem
   \ref{theoFactsMIAlg} from Appendix \ref{secFactsMI}, we have
   \begin{align}
     \hilberg_{n\rightarrow\infty} \sred \card U_a(X_1^n)
     &\le
       \hilberg_{n\rightarrow\infty}\sred
       \kwad{\mathbb{H}_a(X_1^n)-\mathbb{H}(X_1^n)}
       .
   \end{align}
   Since for a computable measure $P$ we have inequality
   (\ref{ShannonFano}) then
   \begin{align}
     \hilberg_{n\rightarrow\infty} \sred \card U_a(X_1^n)=0.
   \end{align}
   Since we have obtained a contradiction with the assumption that the
   process is perigraphic, measure $P$ cannot be computable.
 \end{proof}
 
\section{Theorem about facts and words}
\label{secFactsWords}

In this section, we will present a result about stationary processes,
which we call the theorem about facts and words. That proposition
states that the expected number of independent probabilistic or
algorithmic facts inferrable from the text drawn from a stationary
process must be roughly less than the expected number of distinct
word-like strings detectable in the text by a simple procedure
involving the PPM compression algorithm. This result states, in
particular, that an asymptotic power law growth of the number of
inferrable probabilistic or algorithmic facts as a function of the
text length produces a statistically measurable effect, namely, an
asymptotic power law growth of the number of word-like strings.

To state the theorem about facts and words formally, we need first to
discuss the PPM code. Let us denote strings of symbols
$x_j^k:=(x_j,...,x_k)$, adopting an important convention that $x_j^k$
is the empty string for $k<j$.  In the following, we consider strings
over a finite alphabet, say, $x_i\in\mathbb{X}=\klam{1,...,D}$.  We
define the frequency of a substring $w_1^k$ in a string $x_1^n$ as
\begin{align}
  N(w_1^k|x_1^n):=\sum_{i=1}^{n-k+1}\boole{x_i^{i+k-1}=w_1^k}.
\end{align}
Now we may define the Prediction by Partial Matching (PPM)
probabilities.
\begin{definition}[cf. \cite{ClearyWitten84}]
  For $x_1^n\in\mathbb{X}^n$ and $k\in\klam{-1,0,1,...}$, we put
  \begin{align}
    \PPM_k(x_i|x_1^{i-1})&:=
    \begin{cases}
      \displaystyle\frac{1}{D}, & i\le k,
      \\
      \displaystyle\frac{N(x_{i-k}^i|x_1^{i-1})+1}{N(x_{i-k}^{i-1}|x_1^{i-2})+D},
      & i> k.
    \end{cases}
  \end{align}
  Quantity $\PPM_k(x_i|x_1^{i-1})$ is called the \emph{conditional PPM
    probability} of order $k$ of symbol $x_i$ given string
  $x_1^{i-1}$. Next, we put
  \begin{align}
    \PPM_k(x_1^n)&:=\prod_{i=1}^n\PPM_k(x_i|x_1^{i-1}).
  \end{align}
  Quantity $\PPM_k(x_1^n)$ is called the \emph{PPM probability} of
  order $k$ of string $x_1^n$. Finally, we put
  \begin{align}
    \label{PPM}
    \PPM(x_1^n)&:=\frac{6}{\pi^2}\sum_{k=-1}^\infty
    \frac{\PPM_k(x_1^n)}{(k+2)^2}.
  \end{align}
  Quantity $\PPM(x_1^n)$ is called the (total) \emph{PPM probability} of the
  string $x_1^n$. 
\end{definition}

Quantity $\PPM_k(x_1^n)$ is an incremental approximation of the
unknown true probability of the string $x_1^n$, assuming that the
string has been generated by a Markov process of order $k$.  In
contrast, quantity $\PPM(x_1^n)$ is a mixture of such Markov
approximations for all finite orders.  In general, the PPM
probabilities are probability distributions over strings of a fixed
length.  That is:
\begin{itemize}
\item $\PPM_k(x_i|x_1^{i-1})> 0$ and
$\sum_{x_i\in\mathbb{X}}\PPM_k(x_i|x_1^{i-1})=1$,
\item $\PPM_k(x_1^n)> 0$ and $\sum_{x_1^n\in\mathbb{X}^n}\PPM_k(x_1^n)=1$,
\item $\PPM(x_1^n)> 0$ and $\sum_{x_1^n\in\mathbb{X}^n}\PPM(x_1^n)=1$. 
\end{itemize}

In the following, we define an analogue of the pointwise entropy
\begin{align}
  \mathbb{H}_{\PPM}(x_1^n)&:=-\log\PPM(x_1^n).
\end{align}
Quantity $\mathbb{H}_{\PPM}(x_1^n)$ will be called the length of the
PPM code for the string $x_1^n$.  By nonnegativity of the
Kullback-Leibler divergence, we have for any random block $X_1^n$ that
\begin{align}
  \label{SourceCodingPPM}
  \sred\mathbb{H}_{\PPM}(X_1^n)&\ge \sred\mathbb{H}(X_1^n)
                                 .
\end{align}

The length of the PPM code or the PPM probability repsectively have
two notable properties. First, the PPM probability is a universal
probability, i.e., in the limit, the length of the PPM code
consistently estimates the entropy rate of a stationary source.
Second, the PPM probability can be effectively computed, i.e., the
summation in definition (\ref{PPM}) can be rewritten as a finite
sum. Let us state these two results formally.
\begin{theorem}[cf. \cite{Ryabko10}]
  \label{theoPPMUniversal}
  The PPM probability is universal in expectation, i.e., we have
  \begin{align}
    \label{PPMUniversal}
    \lim_{n\rightarrow\infty}\frac{1}{n}
    \sred\mathbb{H}_{\PPM}(X_1^n)
    &=
      \lim_{n\rightarrow\infty}\frac{1}{n}
    \sred\mathbb{H}(X_1^n)
  \end{align}
  for any stationary process $(X_i)_{i=1}^\infty$.
\end{theorem}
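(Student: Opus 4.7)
The plan is to prove matching upper and lower bounds on $\lim_{n\rightarrow\infty}\frac{1}{n}\sred\mathbb{H}_{\PPM}(X_1^n)$, both equal to the entropy rate $h:=\lim_{n\rightarrow\infty}\frac{1}{n}\sred\mathbb{H}(X_1^n)$, which exists for any stationary process by subadditivity of block entropy. The lower bound is immediate from inequality (\ref{SourceCodingPPM}), since $\sred\mathbb{H}_{\PPM}(X_1^n)\ge \sred\mathbb{H}(X_1^n)$ implies $\liminf_{n\rightarrow\infty}\frac{1}{n}\sred\mathbb{H}_{\PPM}(X_1^n)\ge h$. So the real work lies in the matching upper bound.

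For the upper bound, the key is to exploit the mixture structure in definition (\ref{PPM}). Since every term in that series is nonnegative, for each $k\ge -1$ we have $\PPM(x_1^n)\ge \frac{6}{\pi^2 (k+2)^2}\PPM_k(x_1^n)$, which yields the pointwise bound
\begin{align}
  \mathbb{H}_{\PPM}(x_1^n)\le -\log \PPM_k(x_1^n) + 2\log(k+2) + \log(\pi^2/6).
\end{align}
Taking expectations and dividing by $n$, the additive constants vanish, so
\begin{align}
  \limsup_{n\rightarrow\infty}\frac{1}{n}\sred\mathbb{H}_{\PPM}(X_1^n)
  \le \limsup_{n\rightarrow\infty}\frac{1}{n}\sred\kwad{-\log \PPM_k(X_1^n)}
\end{align}
for every $k$. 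The remaining step is then to show that for each fixed $k$,
\begin{align}
  \limsup_{n\rightarrow\infty}\frac{1}{n}\sred\kwad{-\log \PPM_k(X_1^n)}\le h_k,
  \quad h_k:=\sred \mathbb{H}(X_{k+1}\mid X_1^k).
\end{align}
Expanding $-\log\PPM_k(X_1^n)=\sum_{i=1}^n -\log \PPM_k(X_i\mid X_1^{i-1})$, the factor $\PPM_k(X_i\mid X_1^{i-1})$ is precisely the add-one smoothed empirical estimator of $P(X_i\mid X_{i-k}^{i-1})$ based on the sliding-window counts in $X_1^{i-1}$. The Laplace estimator incurs at most $O(\log n)$ pointwise redundancy per context over any fixed conditional distribution on $\mathbb{X}$; summing over the at most $D^k$ possible length-$k$ contexts gives a total expected redundancy of $O(D^{k+1}\log n)$ against the $k$-th order Markov approximation $\prod_i P(X_i\mid X_{i-k}^{i-1})$. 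Stationarity then identifies the expected log-loss of that approximation with $n h_k + H(X_1^k)$, so division by $n$ and the limit $n\to\infty$ leaves only $h_k$.

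Finally, letting $k\to\infty$ and invoking the standard fact that $h_k\downarrow h$ for stationary processes gives $\limsup_{n\rightarrow\infty}\frac{1}{n}\sred\mathbb{H}_{\PPM}(X_1^n)\le h$, which combined with the lower bound yields equality (\ref{PPMUniversal}). The main obstacle is the universality of $\PPM_k$ for general, not necessarily Markov, stationary sources: one must confirm that the per-symbol expected excess log-loss of the add-one estimator over the best $k$-th order conditional approximation vanishes uniformly in the structure of the source, which requires a careful combinatorial redundancy bound summed over contexts, together with a uniform-integrability argument to pass expectation through the martingale convergence of $\PPM_k(X_i\mid X_1^{i-1})$ to $P(X_i\mid X_{i-k}^{i-1})$. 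This technical step is precisely the content of Ryabko's result cited in the theorem, and I would invoke it rather than reprove it.
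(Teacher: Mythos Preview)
Your approach is correct and in fact more elementary than the paper's. The paper does not give a self-contained argument: it cites Ryabko's Theorem~1.1, which handles the \emph{ergodic} case via iterated applications of the ergodic theorem, and then waves at the ergodic decomposition theorem to lift the result to general stationary processes, explicitly declining to spell this out. Your route, by contrast, bypasses ergodicity entirely: the pointwise individual-sequence redundancy bound for the Laplace estimator, summed over the $D^k$ contexts, gives $-\log\PPM_k(x_1^n)\le -\log Q(x_1^n)+O(D^{k+1}\log n)$ for the $k$-th order Markov approximation $Q$ with conditionals $P(\cdot\mid u)$, valid for \emph{every} string $x_1^n$. Taking expectations and using only stationarity to identify $\sred[-\log Q(X_1^n)]=(n-k)h_k+O(1)$ immediately yields $\limsup_n n^{-1}\sred\mathbb{H}_{\PPM_k}(X_1^n)\le h_k$, and then $h_k\downarrow h$ finishes. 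This handles nonergodic processes directly, with no decomposition needed.

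One comment: your final paragraph overcomplicates matters. You do \emph{not} need martingale convergence of $\PPM_k(X_i\mid X_1^{i-1})$ to $P(X_i\mid X_{i-k}^{i-1})$, nor any uniform-integrability argument. The Laplace redundancy bound you already stated is a deterministic inequality on sequences (it follows from $\binom{m}{n_1,\ldots,n_D}\prod_a p_a^{n_a}\le 1$), so expectation passes through trivially. Also, this is not ``precisely the content of Ryabko's result'': Ryabko's argument is the ergodic-theorem route the paper cites, whereas your argument is the combinatorial redundancy bound, which is older and due essentially to Krichevsky--Trofimov. You have actually proved more than you give yourself credit for.
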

\begin{proof}
  For stationary ergodic processes the above claim follows by an
  iterated application of the ergodic theorem as shown in Theorem 1.1
  from \cite{Ryabko10} for so called measure $R$, which is a slight
  modification of the PPM probability. To generalize the claim for
  nonergodic processes, one can use the ergodic decomposition theorem
  but the exact proof requires a too large theoretical overload to be
  presented within the framework of this paper.
\end{proof}
\begin{theorem}
  \label{theoPPMFinite}
  The PPM probability can be effectively computed, i.e., we have
  \begin{align}
    \label{PPMFinite}
    \PPM(x_1^n)=\frac{6}{\pi^2}\sum_{k=0}^{L(x_1^n)}
    \frac{\PPM_k(x_1^n)}{(k+2)^2}+
    \okra{1-\frac{6}{\pi^2}\sum_{k=0}^{L(x_1^n)}\frac{1}{(k+2)^2}}D^{-n},
  \end{align}
  where
  \begin{align}
    L(x_1^n)=\max \klam{k: \text{$N(w_1^k|x_1^n)> 1$ for some
        $w_1^k$}}
  \end{align}
  is the maximal repetition of string $x_1^n$.
\end{theorem}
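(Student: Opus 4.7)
The plan is to show that $\PPM_k(x_1^n)$ collapses to the uniform value $D^{-n}$ whenever $k = -1$ or $k > L(x_1^n)$, and then to rearrange the defining series by peeling off these tail terms.

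First I would establish the core lemma: for every $k > L(x_1^n)$ and every $i \in \klam{1,\ldots,n}$, $\PPM_k(x_i|x_1^{i-1}) = 1/D$. The case $i \le k$ is immediate from the first branch of the definition. For $i > k$, the value is the smoothed ratio
\begin{align}
\PPM_k(x_i|x_1^{i-1}) = \frac{N(x_{i-k}^i|x_1^{i-1})+1}{N(x_{i-k}^{i-1}|x_1^{i-2})+D},
\end{align}
so it suffices to show that both counts vanish. The substring $x_{i-k}^{i-1}$ has length $k$ and already occurs in $x_1^n$ at position $i-k$. Since $k > L(x_1^n)$, no length-$k$ substring of $x_1^n$ occurs more than once, so this string cannot also occur at any position $j$ with $j+k-1 \le i-2$; in particular, it does not occur anywhere in $x_1^{i-2}$. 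An identical argument applied to the length-$(k+1)$ substring $x_{i-k}^i$ (which is still longer than $L(x_1^n)$) kills the numerator. Taking the product over $i$ gives $\PPM_k(x_1^n) = D^{-n}$ for $k > L(x_1^n)$, and the boundary case $\PPM_{-1}(x_1^n) = D^{-n}$ follows from the uniform convention built into the definition at $k=-1$.

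Plugging these values into the defining series yields
\begin{align}
\PPM(x_1^n) = \frac{6}{\pi^2}\sum_{k=0}^{L(x_1^n)}\frac{\PPM_k(x_1^n)}{(k+2)^2} + \frac{6}{\pi^2}D^{-n}\okra{1 + \sum_{k=L(x_1^n)+1}^{\infty}\frac{1}{(k+2)^2}},
\end{align}
where the leading $1$ in the bracket is the contribution of the $k=-1$ term, whose weight $1/(k+2)^2$ at $k=-1$ equals $1$. The normalization $\sum_{k=-1}^{\infty}(k+2)^{-2} = \pi^2/6$ lets me rewrite the bracketed factor as $\pi^2/6 - \sum_{k=0}^{L(x_1^n)}(k+2)^{-2}$, and multiplication by $6/\pi^2$ then produces exactly the coefficient $1 - \frac{6}{\pi^2}\sum_{k=0}^{L(x_1^n)}(k+2)^{-2}$ of $D^{-n}$ in the claim.

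The main obstacle is the position bookkeeping in the vanishing-count step: one must check that any hypothetical second occurrence of $x_{i-k}^{i-1}$ or $x_{i-k}^i$ inside the prefix we are counting against would still sit entirely within $x_1^n$, so that the maximality of $L(x_1^n)$ can be invoked to forbid it. Once this is done, the remaining series manipulation is forced by the normalization identity and involves no further subtleties.
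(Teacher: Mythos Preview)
Your argument is correct and follows the same route as the paper: show that $\PPM_k(x_1^n)=D^{-n}$ for $k=-1$ and for all $k>L(x_1^n)$, then split the defining series accordingly. The paper's proof is a two-line sketch that only records the vanishing of the denominator count $N(x_{i-k}^{i-1}\mid x_1^{i-2})$ for $k>L(x_1^i)$ and leaves both the numerator bound and the series rearrangement implicit; you have simply supplied those details, and your position bookkeeping is sound.
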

\begin{proof}
  We have $N(x_{i-k}^{i-1}|x_1^{i-2})=0$ for $k>L(x_1^i)$. Hence
  $\PPM_k(x_1^n)=D^{-n}$ for $k>L(x_1^n)$ and in view of this we
  obtain the claim.
\end{proof}
Maximal repetition as a function of a string was studied, e.g., in
\cite{DeLuca99,Debowski15f}.  Since the PPM probability is a
computable probability distribution then by (\ref{ShannonFano}) for a
certain constant $C_{\PPM}$ we have
\begin{align}
  \label{ShannonFanoPPM}
  \mathbb{H}_a(X_1^n)\le \mathbb{H}_{\PPM}(X_1^n)+2\log n+C_{\PPM}. 
\end{align}

Let us denote the length of the PPM code of order $k$,
  \begin{align}
    \mathbb{H}_{\PPM_k}(x_1^n)&:=-\log\PPM_k(x_1^n).
  \end{align}
  As we can easily see, the code length $\mathbb{H}_{\PPM}(x_1^n)$ is
  approximately equal to the minimal code length
  $\mathbb{H}_{\PPM_k}(x_1^n)$ where the minimization goes over
  $k\in\klam{-1,0,1,...}$. Thus it is meaningful to consider this
  definition of the PPM order of an arbitrary string.
\begin{definition}
  The \emph{PPM order} $G_{\PPM}(x_1^n)$ is the smallest $G$ such that
  \begin{align}
    \mathbb{H}_{\PPM_G}(x_1^n)\le \mathbb{H}_{\PPM_k}(x_1^n) \text{ for all $k\ge -1$}.
  \end{align}
\end{definition}
\begin{theorem}
  We have $G_{\PPM}(x_1^n)\le L(x_1^n)$.
\end{theorem}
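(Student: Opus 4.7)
The plan is to read off the claim from Theorem~\ref{theoPPMFinite} plus an observation about the $k=-1$ term, and then appeal to the word \emph{smallest} in the definition of $G_{\PPM}$. First I would record that the proof of Theorem~\ref{theoPPMFinite} already supplies $\PPM_k(x_1^n)=D^{-n}$ for every $k>L(x_1^n)$: for such $k$, the length-$k$ context $x_{i-k}^{i-1}$ occurs nowhere in $x_1^{i-2}$ and the length-$(k+1)$ string $x_{i-k}^{i}$ occurs nowhere in $x_1^{i-1}$, so each conditional PPM factor collapses to $1/D$. Matching the finite-sum formula~(\ref{PPMFinite}) term-by-term against the defining series~(\ref{PPM}) also forces $\PPM_{-1}(x_1^n)=D^{-n}$. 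In logarithmic form this says $\mathbb{H}_{\PPM_k}(x_1^n)=n\log D$ for every $k\in\klam{-1}\cup\klam{L(x_1^n)+1,L(x_1^n)+2,\dots}$.

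Given this, the theorem reduces to a two-case analysis on $M:=\min_{k\ge -1}\mathbb{H}_{\PPM_k}(x_1^n)$, which by the previous paragraph satisfies $M\le n\log D$. If $M<n\log D$, then no $k\in\klam{-1}\cup\klam{L(x_1^n)+1,\dots}$ realises the minimum, so every minimizer lies in $\klam{0,\dots,L(x_1^n)}$ and in particular the smallest one satisfies $G_{\PPM}(x_1^n)\le L(x_1^n)$. If instead $M=n\log D$, then $k=-1$ is itself a minimizer, so $G_{\PPM}(x_1^n)=-1$, which is still $\le L(x_1^n)$ because the empty string always satisfies $N(\text{empty}\mid x_1^n)>1$ for $n\ge 1$, giving $L(x_1^n)\ge 0$.

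I do not anticipate any real obstacle; the argument is bookkeeping on top of Theorem~\ref{theoPPMFinite}. The one point that deserves explicit justification is the identity $\PPM_{-1}(x_1^n)=D^{-n}$, since a careless approach would try to compare $\PPM_{L(x_1^n)}$ directly with $\PPM_{L(x_1^n)+1}$---a comparison that actually goes the \emph{wrong} way, because the length-$L(x_1^n)$ context can recur inside $x_1^{i-2}$ and drive a conditional PPM factor strictly below $1/D$. The $k=-1$ term is the safety net that guarantees that the smallest minimizer of $\mathbb{H}_{\PPM_k}(x_1^n)$ cannot land at $L(x_1^n)+1$.
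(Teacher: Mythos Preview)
Your argument is correct and matches the paper's approach: the one-line proof there simply records that $\PPM_k(x_1^n)=D^{-n}=\PPM_{-1}(x_1^n)$ for every $k>L(x_1^n)$, from which the \emph{smallest}-minimizer clause in the definition of $G_{\PPM}$ immediately gives the bound, exactly as in your case analysis. Your derivation of $\PPM_{-1}(x_1^n)=D^{-n}$ by matching (\ref{PPM}) against (\ref{PPMFinite}) is valid but roundabout---the paper takes it as immediate from the definition (cf.\ equation~(\ref{PPMfirst})).
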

\begin{proof}
 Follows by $\PPM_k(x_1^n)=D^{-n}=\PPM_{-1}(x_1^n)$ for $k>L(x_1^n)$.
\end{proof}

Let us divert for a short while from the PPM code definition.  The set
of distinct substrings of length $m$ in string $x_1^n$ is
\begin{align}
  V(m|x_1^n):=\klam{y_1^m:x_{t+1}^{t+m}=y_1^m
    \text{ for some $0\le t\le n-m$}}.
\end{align}
The cardinality of set $V(m|x_1^n)$ as a function of substring length
$m$ is called the subword complexity of string $x_1^n$
\cite{DeLuca99}. Now let us apply the concept of the PPM order to
define some special set of substrings of an arbitrary string $x_1^n$.
The set of distinct PPM words detected in $x_1^n$ will be defined as
the set $V(m|x_1^n)$ for $m=G_{\PPM}(x_1^n)$, i.e.,
\begin{align}
  \label{PPMVocab}
  V_{\PPM}(x_1^n):=V(G_{\PPM}(X_1^n)|x_1^n).
\end{align} 

Let us define the pointwise mutual information
\begin{align}
  \label{PMI}
  \mathbb{I}(X;Y):=\mathbb{H}(X)+\mathbb{H}(Y)-\mathbb{H}(X,Y)
\end{align}
and the algorithmic mutual information
\begin{align}
  \mathbb{I}_a(u;v):=\mathbb{H}_a(u)+\mathbb{H}_a(v)-\mathbb{H}_a(u,v).
\end{align}
Now we may write down the theorem about facts and words. The theorem
states that the Hilberg exponent for the expected number of initial
independent inferrable facts is less than the Hilberg exponent for the
expected mutual information and this is less than the Hilberg exponent
for the expected number of distinct detected PPM words plus the PPM
order. (The PPM order is usually much less than the number of distinct
PPM words.)
\begin{theorem}[facts and words I, cf.\ \cite{Debowski11b}]
  \label{theoFactsWords}
  Let $(X_i)_{i=1}^\infty$ be a stationary strongly nonergodic process
  over a finite alphabet.  We have inequalities
  \begin{align}
    \hilberg_{n\rightarrow\infty} \sred \card U(X_1^n) &\le
    \hilberg_{n\rightarrow\infty}\sred \mathbb{I}(X_1^n;X_{n+1}^{2n})
    \nonumber\\
    &\le \hilberg_{n\rightarrow\infty} \sred
    \kwad{G_{\PPM}(X_1^n)+\card V_{\PPM}(X_1^n)} .
    \label{FactsWords}
 \end{align}
\end{theorem}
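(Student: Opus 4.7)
The strategy is to exploit stationarity: the function $g$ inferring facts from $X_1^n$ yields them from $X_{n+1}^{2n}$ with the same error law. For each fixed $l$, let $Y^{(j)}_l := (g(1;X^{(j)}),\ldots,g(l;X^{(j)}))$ for $X^{(1)} = X_1^n$ and $X^{(2)} = X_{n+1}^{2n}$. Data processing gives $\mathbb{I}(X_1^n;X_{n+1}^{2n}) \ge \mathbb{I}(Y^{(1)}_l;Y^{(2)}_l)$, and the entropy inequality $\mathbb{I}(A;B) \ge \mathbb{H}(C) - \mathbb{H}(C|A) - \mathbb{H}(C|B)$ applied with $C = Z_1^l$ (so $\mathbb{H}(C) = l\log 2$), together with Fano controlling $\sred \mathbb{H}(Z_1^l | X^{(j)})$ by the error probability $\epsilon_l(n) := P(\card U(X_1^n) < l)$, yields the bound $\sred \mathbb{I}(X_1^n;X_{n+1}^{2n}) \ge l \log 2 \cdot (1 - 2\epsilon_l(n)) - 2 h(\epsilon_l(n))$ for every $l$. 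Since $\sred \card U(X_1^n) = \sum_{l \ge 1} (1 - \epsilon_l(n))$, choosing $l$ to balance the linear and error terms gives an inequality of the form $\sred \card U(X_1^n) \le C \sred \mathbb{I}(X_1^n;X_{n+1}^{2n}) + O(\log n)$, from which the Hilberg exponent comparison follows.

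\textbf{Plan for the second inequality.} Decompose via stationarity $\sred \mathbb{I}(X_1^n;X_{n+1}^{2n}) = 2\sred \mathbb{H}(X_1^n) - \sred \mathbb{H}(X_1^{2n})$, then substitute PPM code lengths using (\ref{SourceCodingPPM}) to upper bound the Shannon entropies. The mutual information is then dominated by a ``PPM mutual information'' $\sred \kwad{\mathbb{H}_{\PPM}(X_1^n) + \mathbb{H}_{\PPM}(X_{n+1}^{2n}) - \mathbb{H}_{\PPM}(X_1^{2n})}$ plus a PPM code redundancy term that is sublinear by the universality Theorem~\ref{theoPPMUniversal}. The key combinatorial step is that the PPM code of order $G_{\PPM}(X_1^n)$ factors through the empirical $(G_{\PPM}+1)$-gram distribution on $X_1^n$, which is supported on at most $\card V_{\PPM}(X_1^n) \cdot D$ entries with counts bounded by $n$. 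Describing this sufficient statistic, together with the order $G_{\PPM}(X_1^n)$ itself, costs at most $C \okra{G_{\PPM}(X_1^n) + \card V_{\PPM}(X_1^n)} \log n$ bits, which bounds the PPM mutual information. The logarithmic factor is absorbed by the Hilberg exponent.

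\textbf{Main obstacle.} The second inequality is the technically harder one. The subtle point is to pass from the intrinsic Shannon mutual information of the measure $P$ to empirical statistics of a single finite sample, controlling the approximation error in expectation rather than almost surely. This requires a careful averaging argument that combines the PPM universality Theorem~\ref{theoPPMUniversal}, the Shannon-Fano bound (\ref{ShannonFanoPPM}) applied to $X_1^{2n}$, and the estimate $G_{\PPM}(X_1^n) \le L(X_1^n)$ together with logarithmic control on the maximal repetition, so that the sublinear redundancy terms do not spoil the Hilberg exponent.
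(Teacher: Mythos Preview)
Both halves of your plan share a common missing ingredient: you try to compare mutual informations directly, but the paper routes \emph{both} inequalities through the redundancy $\sred\mathbb{H}(X_1^n)-hn$ (respectively $\sred\mathbb{H}_{\PPM}(X_1^n)-hn$) and then invokes the telescoping lemma Theorem~\ref{theoHilbergRedundancy}, which identifies $\hilberg_n\kwad{\sred\mathbb{H}(X_1^n)-hn}$ with $\hilberg_n\sred\mathbb{I}(X_1^n;X_{n+1}^{2n})$. Without this detour your arguments do not close.

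For the first inequality, your submodularity/Fano bound $\sred\mathbb{I}(X_1^n;X_{n+1}^{2n})\ge l\log 2\,(1-2\epsilon_l(n))-2h(\epsilon_l(n))$ is valid, but the step ``choosing $l$ to balance gives $\sred\card U\le C\,\sred\mathbb{I}+O(\log n)$'' does not follow. The optimal deterministic $l$ is essentially a quantile of $\card U(X_1^n)$, and its Hilberg exponent can be strictly smaller than that of $\sred\card U$ when $\card U$ has a heavy tail: e.g.\ if $P(\card U\ge l)\asymp(a_n/l)^{1/2}$ on $a_n\le l\le a_n^3$ then $\sred\card U\asymp a_n^2$ while your bound yields only $\sred\mathbb{I}\gtrsim a_n$. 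The paper avoids this by taking the \emph{random} cutoff $S_n=\card U(X_1^n)$, showing $\sred\mathbb{H}(Z_1^{S_n}|S_n)\ge(\log 2)\,\sred S_n-\sred\mathbb{H}(S_n)$, and deducing $\sred\mathbb{H}(X_1^n)-hn\ge(\log 2)\,\sred S_n-O(\log\sred S_n)$ via the fact that $Z_1^\infty$ is measurable with respect to $X_{n+1}^\infty$; the passage to $\sred\mathbb{I}$ is then Theorem~\ref{theoHilbergRedundancy}. For the second inequality, ``sublinear by universality'' gives only $R(2n):=\sred\mathbb{H}_{\PPM}(X_1^{2n})-\sred\mathbb{H}(X_1^{2n})=o(n)$, hence merely $\hilberg_n R(2n)\le 1$, not the needed $\hilberg_n R(2n)\le\hilberg_n\sred\kwad{G_{\PPM}+\card V_{\PPM}}$. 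The paper sidesteps this by comparing redundancies, $\sred\mathbb{H}(X_1^n)-hn\le\sred\mathbb{H}_{\PPM}(X_1^n)-hn$, and applying Theorem~\ref{theoHilbergRedundancy} once more to reach $\sred\mathbb{I}_{\PPM}$. Your sufficient-statistic sketch for bounding $\sred\mathbb{I}_{\PPM}$ is morally in the right direction, but the mechanism that actually makes it work is the pointwise superadditivity of the empirical-entropy component, $\mathbb{I}_{\PPM^0_k}(x_1^n;x_{n+1}^{2n})\le 0$ (Theorem~\ref{theoPPMZeroMI}), which isolates the model-description term $\mathbb{H}_{\PPM^1_k}=O\bigl(\card V(k|\cdot)\log n\bigr)$ as the sole contribution; you do not state this step.
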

\begin{proof}
  The claim follows by conjunction of Theorem \ref{theoFactsMI} from
  Appendix \ref{secFactsMI} and Theorem \ref{theoMIWords} from
  Appendix \ref{secMIWords}.
\end{proof}
Theorem \ref{theoFactsWords} has also an algorithmic version, for
ergodic processes in particular.
\begin{theorem}[facts and words II]
  \label{theoFactsWordsAlg}
  Let $(X_i)_{i=1}^\infty$ be a stationary process over a finite
  alphabet.  We have inequalities
  \begin{align}
    \hilberg_{n\rightarrow\infty} \sred \card U_a(X_1^n) &\le
    \hilberg_{n\rightarrow\infty}\sred \mathbb{I}_a(X_1^n;X_{n+1}^{2n})
    \nonumber\\
    &\le \hilberg_{n\rightarrow\infty} \sred
    \kwad{G_{\PPM}(X_1^n)+\card V_{\PPM}(X_1^n)} .
    \label{FactsWordsAlg}
 \end{align}
\end{theorem}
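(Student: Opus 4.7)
The strategy is to prove the two inequalities in (\ref{FactsWordsAlg}) separately, mirroring the split used for Theorem~\ref{theoFactsWords}: first an algorithmic facts-to-mutual-information bound (the analog of Theorem~\ref{theoFactsMIAlg} in Appendix~\ref{secFactsMI}), then a mutual-information-to-words bound obtained by transferring its Shannon counterpart (Appendix~\ref{secMIWords}) through the Shannon--Fano and universality properties of the PPM code.

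For the left inequality, fix $l$ and observe that on the event $l\le \min(\card U_a(X_1^n),\card U_a(X_{n+1}^{2n}))$ the block $z_1^l$ is computable from $X_1^n$ alone and also from $X_{n+1}^{2n}$ alone: one just runs $g(k;\cdot)$ for $k=1,\dots,l$ on either half and, by definition of $U_a$, the outputs agree with $z_1^l$. Hence $\mathbb{H}_a(z_1^l\mid X_1^n)$ and $\mathbb{H}_a(z_1^l\mid X_{n+1}^{2n})$ are both $O(\log n)$ on this event. The standard inequality $\mathbb{I}_a(u;v)\ge \mathbb{H}_a(w)-\mathbb{H}_a(w\mid u)-\mathbb{H}_a(w\mid v)-O(\log)$, combined with $\mathbb{H}_a(z_1^l)\ge l\log 2-O(1)$ from algorithmic randomness of $(z_k)_{k=1}^\infty$, then yields
\begin{align}
\mathbb{I}_a(X_1^n;X_{n+1}^{2n})\ge l\log 2-O(\log n)
\end{align}
on this event. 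Stationarity equates the laws of the two cardinalities; choosing $l$ as a fixed fraction of $\sred \card U_a(X_1^n)$ and estimating the probability of the good event by Markov's inequality, the bound passes to expectations, and the $\hilberg$ operator absorbs all $O(\log n)$ overheads.

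For the right inequality, (\ref{ShannonFanoPPM}) gives $\mathbb{H}_a(X_1^n)\le \mathbb{H}_{\PPM}(X_1^n)+O(\log n)$ and similarly for the second half, while (\ref{SourceCoding}) gives $\sred\mathbb{H}_a(X_1^{2n})\ge \sred\mathbb{H}(X_1^{2n})$. Inserting these into the definition of $\mathbb{I}_a$ reduces the task to bounding $\sred\kwad{\mathbb{H}_{\PPM}(X_1^n)+\mathbb{H}_{\PPM}(X_{n+1}^{2n})-\mathbb{H}(X_1^{2n})}$ by the expected value of $G_{\PPM}+\card V_{\PPM}$. This is exactly the Shannon-side majorization already required for Theorem~\ref{theoFactsWords}, and it transfers verbatim once one takes the $\hilberg$ of both sides.

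The main obstacle lies in the Shannon-side bound itself. The heuristic is that a two-part code fixing the PPM order $G=G_{\PPM}(X_1^n)$ and enumerating the $\card V_{\PPM}(X_1^n)$ distinct $G$-grams together with their positions compresses $X_1^n$ nearly to $\mathbb{H}_{\PPM_G}(X_1^n)$; when two halves share such a dictionary the combined description saves roughly $G+\card V_{\PPM}$ bits over coding them independently, which is the precise source of the mutual-information bound. Making this saving rigorous, and uniform enough in $n$ to survive the $\hilberg$ operator, is the delicate combinatorial heart of the argument.
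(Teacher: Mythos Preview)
Your two-halves approach to the left inequality has a genuine gap. You need a lower bound on $P\bigl(\min(\card U_a(X_1^n),\card U_a(X_{n+1}^{2n}))\ge l\bigr)$ for $l$ a fixed fraction of $\sred\card U_a(X_1^n)$, but Markov's inequality only furnishes \emph{upper} bounds on upper tails; it says nothing about $P(S_n\ge c\,\sred S_n)$ from below. Nothing in the hypotheses rules out $\card U_a(X_1^n)$ being zero with high probability and very large on a rare event; in that regime your good event has vanishing probability and the resulting lower bound on $\sred\mathbb{I}_a$ collapses, even though $\sred\card U_a(X_1^n)$ is large.

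The paper sidesteps this entirely by never comparing the two halves directly. It uses the redundancy $\sred\mathbb{H}_a(X_1^n)-hn$ as a pivot, together with the identity $\hilberg_{n\rightarrow\infty}\bigl[\sred\mathbb{H}_a(X_1^n)-hn\bigr]=\hilberg_{n\rightarrow\infty}\sred\mathbb{I}_a(X_1^n;X_{n+1}^{2n})$ (equation~(\ref{RedundancyMIAlg}), from Theorem~\ref{theoHilbergRedundancy}). The core estimate is the \emph{pointwise} bound $\mathbb{H}_a(X_1^n)-\mathbb{H}_a(X_1^n\mid z_1^\infty)\pge(\log 2)\,S_n-O(\log S_n)$ for $S_n=\card U_a(X_1^n)$, obtained by conditioning on the infinite sequence $z_1^\infty$; taking expectations (with Jensen for the $\log$) and using $\sred\mathbb{H}_a(X_1^n\mid z_1^\infty)\ge\sred\mathbb{H}(X_1^n)\ge hn$ finishes it. Only one copy of $S_n$ ever appears, so no simultaneity problem arises.

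For the right inequality, your reduction lands on $2\sred\mathbb{H}_{\PPM}(X_1^n)-\sred\mathbb{H}(X_1^{2n})$, which is not the quantity bounded in the Shannon case; you still need PPM universality (Theorem~\ref{theoPPMUniversal}) plus Theorem~\ref{theoHilbergRedundancy} to pass through the redundancy $\sred\mathbb{H}_{\PPM}(X_1^n)-hn$, exactly as in the paper's Theorem~\ref{theoMIWords}. Your final paragraph is a plausible heuristic but not a proof: the paper's actual combinatorial argument (Theorems~\ref{theoPPMbound}--\ref{theoPPMVocabMI}) bounds $\mathbb{I}_{\PPM}(x_1^n;x_{n+1}^{2n})$ directly from the exact factorial formula~(\ref{PPMexact}) for $\mathbb{H}_{\PPM_k}$, showing that the ``model cost'' term $\mathbb{H}_{\PPM_k^1}$ is sandwiched by multiples of $\card V(k\mid x_1^{n-1})$ while the ``data cost'' term $\mathbb{H}_{\PPM_k^0}$ is superadditive, rather than via any dictionary-sharing two-part code.
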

\begin{proof}
  The claim follows by conjunction of Theorem \ref{theoFactsMIAlg} from
  Appendix \ref{secFactsMI} and Theorem \ref{theoMIWords} from
  Appendix \ref{secMIWords}.
\end{proof}

The theorem about facts and words previously proven in
\cite{Debowski11b} differs from Theorem \ref{theoFactsWords} in three
aspects. First of all, the theorem in \cite{Debowski11b} did not apply
the concept of the Hilberg exponent and compared
$\liminf_{n\rightarrow\infty}$ with $\limsup_{n\rightarrow\infty}$
rather than $\limsup_{n\rightarrow\infty}$ with
$\limsup_{n\rightarrow\infty}$. Second, the number of inferrable facts
was defined as a functional of the process distribution rather than a
random variable depending on a particular text. Third, the number of
words was defined using a minimal grammar-based code rather than the
concept of the PPM order.  Minimal grammar-based codes are not
computable in a polynomial time in contrast to the PPM order.  Thus we
may claim that Theorem \ref{theoFactsWords} is stronger than the
theorem about facts and words previously proven in
\cite{Debowski11b}. Moreover, applying Kolmogorov complexity and
algorithmic randomness to formulate and prove Theorem
\ref{theoFactsWordsAlg} is a new idea.

It is an interesting question whether we have an almost sure version
of Theorems \ref{theoFactsWords} and \ref{theoFactsWordsAlg}, namely,
whether
\begin{align}
  \hilberg_{n\rightarrow\infty} \card U(X_1^n)
  &\le
    \hilberg_{n\rightarrow\infty} \mathbb{I}(X_1^n;X_{n+1}^{2n})
    \nonumber\\
  &\le \hilberg_{n\rightarrow\infty} \kwad{G_{\PPM}(X_1^n)+\card
    V_{\PPM}(X_1^n)} \text{ almost surely}
    \label{FactsWordsAS}
\end{align}
for strongly nonergodic processes, or
\begin{align}
  \hilberg_{n\rightarrow\infty} \card U_a(X_1^n)
  &\le
    \hilberg_{n\rightarrow\infty} \mathbb{I}_a(X_1^n;X_{n+1}^{2n})
    \nonumber\\
  &\le \hilberg_{n\rightarrow\infty} \kwad{G_{\PPM}(X_1^n)+\card
    V_{\PPM}(X_1^n)} \text{ almost surely}
    \label{FactsWordsAS}
\end{align}
for general stationary processes.  We leave this question as an open
problem.

\section{Hilberg exponents and empirical data}
\label{secNaturalLanguage}

It is advisable to show that the Hilberg exponents considered in
Theorem \ref{theoFactsWords} can assume any value in range $[0,1]$ and
the difference between them can be arbitrarily large. We adopt a
convention that the set of inferrable probabilistic facts is empty for
ergodic processes, $U(X_1^n)=\emptyset$. With this remark in mind, let
us inspect some examples of processes.

First of all, for Markov processes and their strongly nonergodic
mixtures, of any order $k$ but over a finite alphabet, we have
\begin{align}
  \hilberg_{n\rightarrow\infty} \sred\card U(X_1^n) 
  =
  \hilberg_{n\rightarrow\infty} \sred \mathbb{I}(X_1^n;X_{n+1}^{2n}) 
  =
  0
  .
\end{align}
This happens to be so since the sufficient statistic of text $X_1^n$
for predicting text $X_{n+1}^{2n}$ is the maximum likelihood estimate
of the transition matrix, the elements of which can assume at most
$(n+1)$ distinct values. Hence
$\sred \mathbb{I}(X_1^n;X_{n+1}^{2n})\le D^{k+1}\log (n+1)$, where $D$ is
the cardinality of the alphabet and $k$ is the Markov order of the
process.  Similarly, it can be shown for these processes that the PPM
order satisfies $\lim_{n\rightarrow\infty}G_{\PPM}(X_1^n)\le k$. Hence
the number of PPM words, which satisfies inequality
$\card V_{\PPM}(X_1^n)\le D^{G_{\PPM}(X_1^n)}$, is also bounded
above. In consequence, for Markov processes and their strongly
nonergodic mixtures, of any order but over a finite alphabet, we
obtain
\begin{align}
  \label{FiniteMarkovHilberg}
  \hilberg_{n\rightarrow\infty}  \kwad{G_{\PPM}(X_1^n)+\card V_{\PPM}(X_1^n)}
  =0 \text{ almost surely}.
\end{align}

In contrast, Santa Fe processes are strongly nonergodic mixtures of
some IID processes over an infinite alphabet. Being mixtures of IID
processes over an infinite alphabet, they need not satisfy condition
(\ref{FiniteMarkovHilberg}). In fact, as shown in
\cite{Debowski11b,Debowski12} and Appendix \ref{secSantaFe}, for the
Santa Fe process with exponent $\alpha$ we have the asymptotic
power-law growth
\begin{align}
  \hilberg_{n\rightarrow\infty} \sred\card U(X_1^n) 
  =
  \hilberg_{n\rightarrow\infty} \sred \mathbb{I}(X_1^n;X_{n+1}^{2n}) 
  =
  1/\alpha\in(0,1)
  .
\end{align}
The same equality for the number of inferrable probabilistic facts and
the mutual information is also satisfied by a stationary coding of the
Santa Fe process into a finite alphabet, see \cite{Debowski12}.

Let us also note that, whereas the theorem about facts and words
provides an inequality of Hilberg exponents, this inequality can be
strict. To provide some substance, in \cite{Debowski12}, we have
constructed a modification of the Santa Fe process which is ergodic
and over a finite alphabet. For this modification, we have only the
power-law growth of mutual information
\begin{align}
  \hilberg_{n\rightarrow\infty}
  \sred \mathbb{I}(X_1^n;X_{n+1}^{2n}) &= 1/\alpha\in(0,1).
\end{align}
Since in this case,
$\hilberg_{n\rightarrow\infty} \sred\card U(X_1^n)=0$ then the
difference between the Hilberg exponents for the number of inferrable
probabilistic facts and the number of PPM words can be an arbitrary
number in range $(0,1)$.

Now we are in a position to discuss some empirical data.  In this
case, we cannot directly measure the number of facts and the mutual
information but we can compute the PPM order and count the number of
PPM words.  In Figure~\ref{figPPMVocabulary}, we have presented data
for a collection of 35 plays by William
Shakespeare\footnote{Downloaded from the Project Gutenberg,
  \url{https://www.gutenberg.org/}.} and a random permutation of
characters appearing in this collection of texts.  The random
permutation of characters is an IID process over a finite alphabet so
in this case we obtain
\begin{align}
  \label{WordsHilbergZero}
  \hilberg_{n\rightarrow\infty} \card V_{\PPM}(x_1^n)=0
  .
\end{align}
In contrast, for the plays of Shakespeare we seem to have a stepwise
power law growth of the number of distinct PPM words. Thus we may
suppose that for natural language we have more generally
\begin{align}
  \label{WordsHilbergNL}
  \hilberg_{n\rightarrow\infty} \card V_{\PPM}(x_1^n)>0
  .
\end{align}
If relationship (\ref{WordsHilbergNL}) holds true then natural
language cannot be a Markov process of any order.  Moreover, in view
of the striking difference between observations
(\ref{WordsHilbergZero}) and (\ref{WordsHilbergNL}), we may suppose
that the number of inferrable probabilistic or algorithmic facts for
texts in natural language also obeys a power-law growth. Formally
speaking, this condition would translate to natural language being
strongly nonergodic or perigraphic.  We note that this hypothesis
arises only as a form of a weak inductive inference since formally we
cannot deduce condition (\ref{Perigraphic}) from mere condition
(\ref{WordsHilbergNL}), regardless of the amount of data supporting
condition (\ref{WordsHilbergNL}).

\begin{figure}[p]
  \centering
  \includegraphics[width=0.9\textwidth]{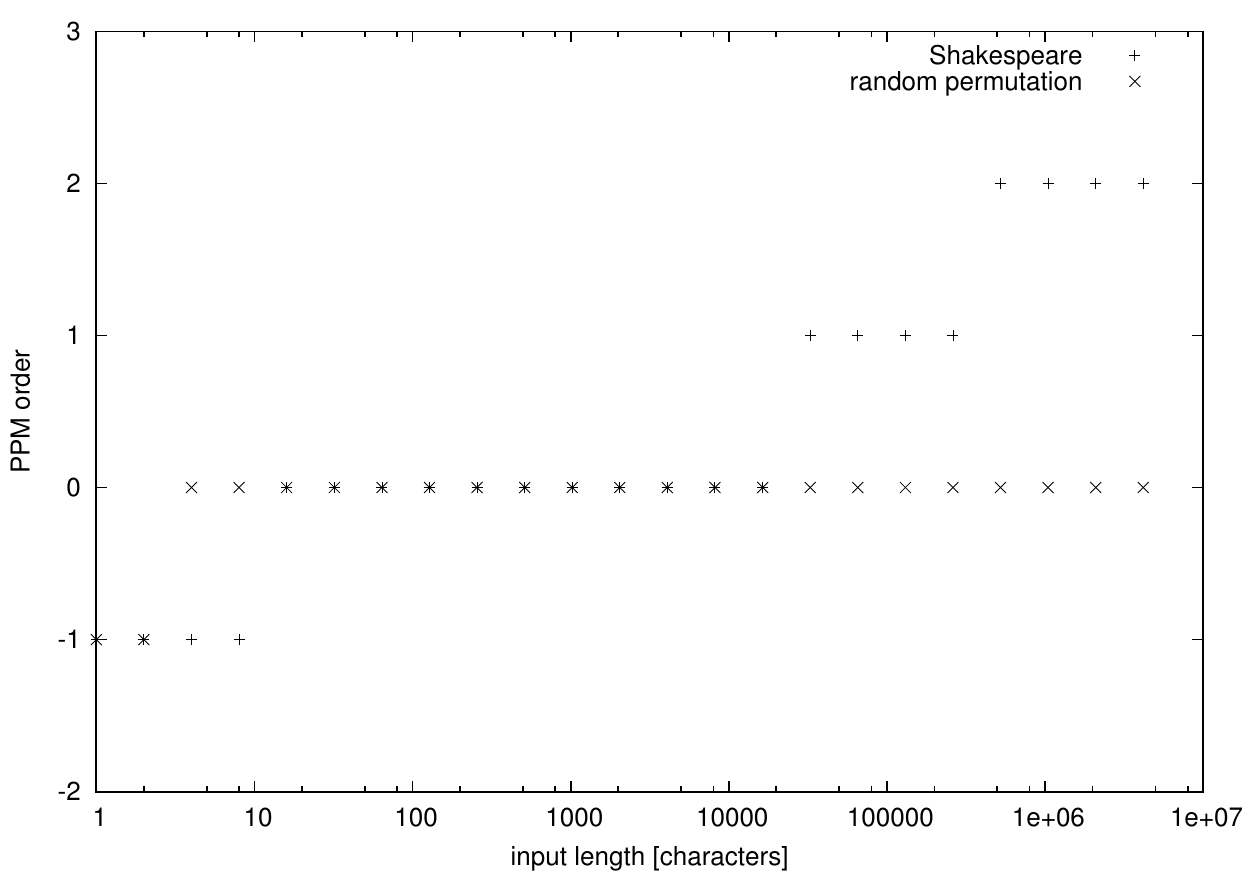}
  \centering
  \includegraphics[width=0.9\textwidth]{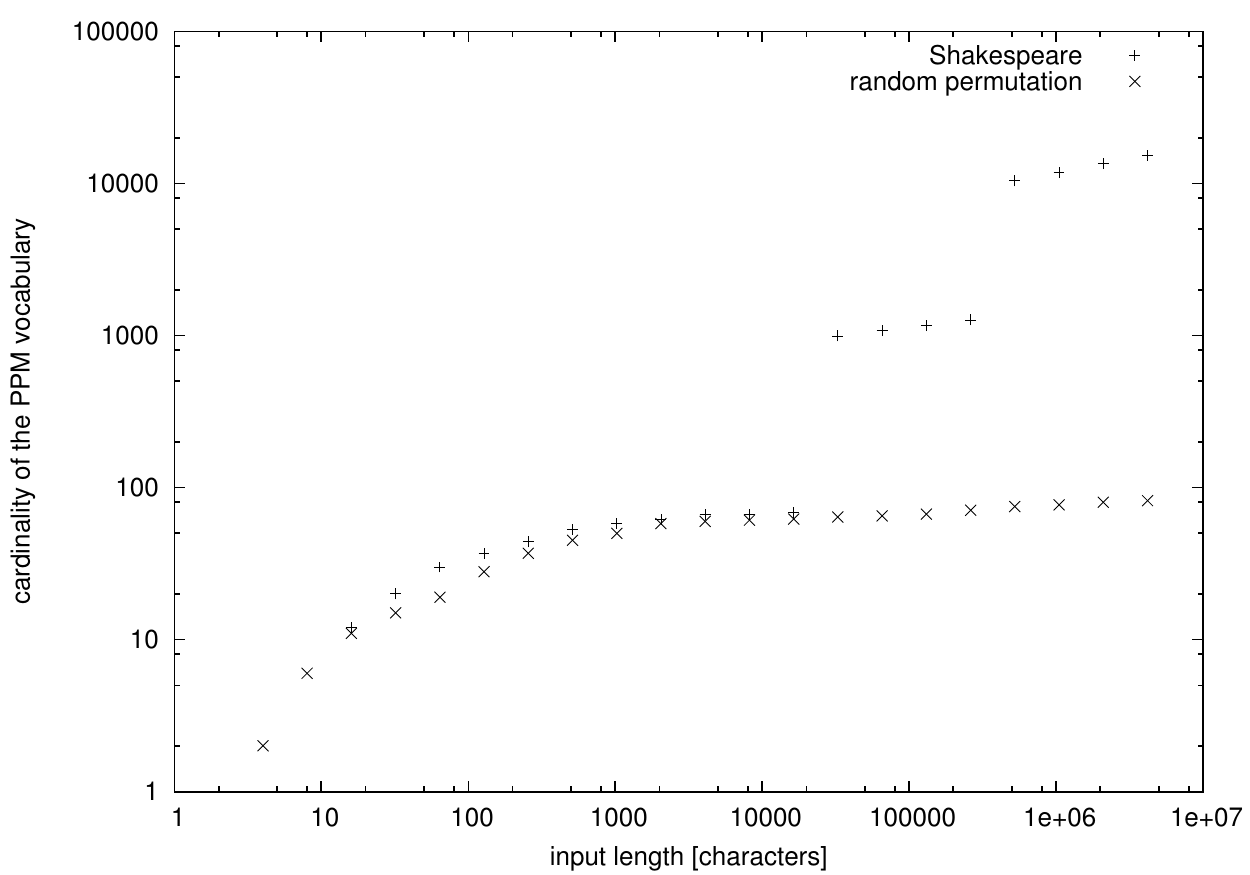}
  \caption{\label{figPPMVocabulary} The PPM order $G_{\PPM}(x_1^n)$
    and the cardinality of the PPM vocabulary $\card V_{\PPM}(x_1^n)$
    versus the input length $n$ for William Shakespeare's First
    Folio/35 Plays and a random permutation of the
    text's characters.}
\end{figure}

\section{Conclusion}
\label{secConclusion}

In this article, a stationary process has been called strongly
nonergodic if some persistent random topic can be detected in the
process and an infinite number of independent binary random variables,
called probabilistic facts, is needed to describe this topic
completely. Replacing probabilistic facts with an algorithmically
random sequence of bits, called algorithmic facts, we have adapted
this property back to ergodic processes. Subsequently, we have called
a process perigraphic if the number of algorithmic facts which can be
inferred from a finite text sampled from the process grows like a
power of the text length.

We have demonstrated an assertion, which we call the theorem about
facts and words. This proposition states that the number of
independent probabilistic or algorithmic facts which can be inferred
from a text drawn from a process must be roughly smaller than the
number of distinct word-like strings detected in this text by means of
the PPM compression algorithm.  We have exhibited two versions of this
theorem: one for strongly nonergodic processes, applying the Shannon
information theory, and one for ergodic processes, applying the
algorithmic information theory.

Subsequently, we have exhibited an empirical observation that the
number of distinct word-like strings grows like a stepwise power law
for a collections of plays by William Shakespeare, in a stark contrast
to Markov processes. This observation does not rule out that the
number of probabilistic or algorithmic facts inferrable from texts in
natural language also grows like a power law. Hence we have supposed
that natural language is a perigraphic process.
  
We suppose that the path of the future related research should lead
through a further analysis of the theorem about facts and words and
demonstrating an almost sure version of this statement.

\section*{Acknowlegdment}

We wish to thank Jacek Koronacki, Jan Mielniczuk, and Vladimir Vovk
for helpful comments.

\appendix

\section{Facts and mutual information}
\label{secFactsMI}

In the appendices, we will make use of several kinds of information measures.
\begin{enumerate}
\item First, there are four pointwise Shannon information measures:
\begin{itemize}
\item entropy \\ $\mathbb{H}(X)=-\log P(X)$,
\item conditional entropy \\ $\mathbb{H}(X|Z):=-\log P(X|Z)$,
\item mutual information \\
  $\mathbb{I}(X;Y):=\mathbb{H}(X)+\mathbb{H}(Y)-\mathbb{H}(X,Y)$,
\item conditional mutual information \\
  $\mathbb{I}(X;Y|Z):=\mathbb{H}(X|Z)+\mathbb{H}(Y|Z)-\mathbb{H}(X,Y|Z)$,
\end{itemize}
where $P(X)$ is the probability of a random variable $X$ and $P(X|Z)$
is the conditional probability of a random variable $X$ given a random
variable $Z$.  The above definitions make sense for discrete-valued
random variables $X$ and $Y$ and an arbitrary random variable $Z$.  If
$Z$ is a discrete-valued random variable then also
$\mathbb{H}(X,Z)-\mathbb{H}(Z)=\mathbb{H}(X|Z)$ and
$\mathbb{I}(X;Z)=\mathbb{H}(X)-\mathbb{H}(X|Z)$.
\item 
Moreover, we will use four algorithmic information measures:
\begin{itemize}
\item entropy \\ $\mathbb{H}_a(x)=K(x)\log 2$,
\item conditional entropy \\
  $\mathbb{H}_a(x|z):=K(x|z)\log 2$,
\item mutual information \\
  $\mathbb{I}_a(x;y):=\mathbb{H}_a(x)+\mathbb{H}_a(y)-\mathbb{H}_a(x,y)$,
\item conditional mutual information \\
  $\mathbb{I}_a(x;y|z):=\mathbb{H}_a(x|z)+\mathbb{H}_a(y|z)-\mathbb{H}_a(x,y|z)$,
\end{itemize}
where $K(x)$ is the prefix-free Kolmogorov complexity of an object $x$
and $K(x|z)$ is the prefix-free Kolmogorov complexity of an object $x$
given an object $z$.  In the above definitions, $x$ and $y$ must be
finite objects (finite texts), whereas $z$ can be also an infinite
object (an infinite sequence). If $z$ is a finite object then
$\mathbb{H}_a(x,z)-\mathbb{H}_a(z)\peq \mathbb{H}_a(x|z,K(z))$ rather
than being equal to $\mathbb{H}_a(x|z)$, where $\peq$, $\ple$, and
$\pge$ are the equality and the inequalities up to an additive
constant \cite[Theorem 3.9.1]{LiVitanyi08}. Hence
\begin{align}
  \mathbb{H}_a(x)-\mathbb{H}_a(x|z)+\mathbb{H}_a(K(z))
  &\pge
    \mathbb{I}_a(x;z)\peq \mathbb{H}_a(x)-\mathbb{H}_a(x|z,K(z))
    \nonumber\\
    &\pge
  \mathbb{H}_a(x)-\mathbb{H}_a(x|z)
  .
\end{align}
\end{enumerate}

In the following, we will prove a result for Hilberg exponents.
\begin{theorem}
  \label{theoHilbergRedundancy}
  Define $\mathfrak{J}(n):=2\mathfrak{G}(n)-\mathfrak{G}(2n)$. If the limit
  $\lim_{n\rightarrow\infty} \mathfrak{G}(n)/n=\mathfrak{g}$ exists
  and is finite then
\begin{align}
  \label{HilbergRedundancyI}
  \hilberg_{n\rightarrow\infty} \kwad{\mathfrak{G}(n)-n\mathfrak{g}}
  \le
  \hilberg_{n\rightarrow\infty} \mathfrak{J}(n)
  , 
\end{align}
with an equality if $\mathfrak{J}(2^n)\pge 0$ for all but finitely many $n$.
\end{theorem}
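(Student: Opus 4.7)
The plan is to normalize by setting $R(n) := \mathfrak{G}(n) - n\mathfrak{g}$, so that the hypothesis becomes $R(n) = o(n)$ and the definition of $\mathfrak{J}$ rewrites as the clean two-term recurrence $R(2n) = 2R(n) - \mathfrak{J}(n)$. Specialising to powers of two and dividing by $2^{k+1}$ turns this into a first-order linear difference equation $a_{k+1} = a_k - b_k$, where $a_k := R(2^k)/2^k$ and $b_k := \mathfrak{J}(2^k)/2^{k+1}$. Since $a_k \to 0$, the series $\sum_i b_i$ converges and one obtains the key identity
\[
  R(2^N) \;=\; \sum_{i=N}^{\infty} 2^{N-i-1}\,\mathfrak{J}(2^i).
\]
This telescoped representation is the engine for both directions.

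For inequality~(\ref{HilbergRedundancyI}), set $\beta := \hilberg_{n\rightarrow\infty} \mathfrak{J}(n)$. If $\beta \ge 1$ the bound is immediate, since $R(n) = o(n)$ already forces $\hilberg_{n\rightarrow\infty} R(n) \le 1$. Otherwise, for any $\epsilon > 0$ with $\beta + \epsilon < 1$, the definition of $\hilberg$ via $\log^+$ gives $\mathfrak{J}(2^i) \le 2^{i(\beta+\epsilon)}$ for all sufficiently large $i$ (trivially when $\mathfrak{J}(2^i) < 0$). Substituting into the identity and summing a geometric series of ratio $2^{\beta+\epsilon-1} < 1$ yields $R(2^N) \le C\cdot 2^{N(\beta+\epsilon)}$ for a constant $C$, hence $\hilberg_{n\rightarrow\infty} R(n) \le \beta + \epsilon$, and letting $\epsilon \downarrow 0$ closes this direction.

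For the reverse inequality under the additional hypothesis $\mathfrak{J}(2^n) \pge 0$, suppose $\mathfrak{J}(2^i) \ge -C$ for all but finitely many $i$. Plugging this into the series gives $b_i \ge -C\cdot 2^{-(i+1)}$, so the tail sum yields $R(2^N) \ge -C'$ for all $N$ and some constant $C'$. Feeding this lower bound back into $\mathfrak{J}(2^i) = 2R(2^i) - R(2^{i+1})$ produces the pointwise upper bound $\mathfrak{J}(2^i) \le 2R(2^i) + C'$, and a routine case split on the sign of $R(2^i)$ produces $\log^+ \mathfrak{J}(2^i) \le \log^+ R(2^i) + O(1)$. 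Dividing by $i\log 2$ and taking $\limsup$ gives $\hilberg_{n\rightarrow\infty} \mathfrak{J}(n) \le \hilberg_{n\rightarrow\infty} R(n)$, which combined with the first direction yields the claimed equality.

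The main obstacle I anticipate is the fastidious bookkeeping around $\log^+$: because $\log^+$ collapses negative values to $0$, one must check at each step that summands of the ``wrong'' sign do not invalidate the asymptotic bound. The geometric-series estimate only converges when $\beta + \epsilon < 1$, so the trivial a~priori inequality $\hilberg_{n\rightarrow\infty} R(n) \le 1$, which uses only the hypothesis $R(n) = o(n)$, must be invoked as a separate safety net for the borderline case $\beta = 1$.
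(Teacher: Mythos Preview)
Your proof is correct and follows essentially the same approach as the paper: both derive the telescoped identity $R(2^N)=\sum_{i\ge N}2^{N-i-1}\mathfrak{J}(2^i)$, bound it by a geometric series when $\beta+\epsilon<1$ (falling back on the trivial bound $\hilberg R\le 1$ otherwise), and for the reverse inequality use $\mathfrak{J}(2^n)\pge 0$ to extract $\mathfrak{J}(2^n)\le 2R(2^n)+O(1)$ from the same identity. Your detour through the intermediate bound $R(2^N)\ge -C'$ before invoking $\mathfrak{J}(2^i)=2R(2^i)-R(2^{i+1})$ is a minor cosmetic variation on the paper's direct reading of the first summand, not a different argument.
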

\begin{proof}
  The proof makes use of the telescope sum 
  \begin{align}
    \label{ESeriesRedundancy}
    \sum_{k=0}^\infty \frac{\mathfrak{J}(2^{k+n})}{2^{k+1}}
    =
    \mathfrak{G}(2^n)-2^n\mathfrak{g}
    .
  \end{align}
  Denote $\delta:=\hilberg_{n\rightarrow\infty}
  \mathfrak{J}(n)$. Since $\hilberg_{n\rightarrow\infty}
  \okra{\mathfrak{G}(n)-n\mathfrak{g}}\le 1$, it is sufficient to
  prove inequality (\ref{HilbergRedundancyI}) for $\delta<1$.  In this
  case, $\mathfrak{J}(2^n)\le 2^{(\delta+\epsilon)n}$ for all but finitely
  many $n$ for any $\epsilon>0$. Then for $\epsilon< 1-\delta$, by the
  telescope sum (\ref{ESeriesRedundancy}) we obtain for sufficiently
  large $n$ that
  \begin{align}
    \mathfrak{G}(2^n)-2^n\mathfrak{g}\le \sum_{k=0}^\infty
    \frac{2^{(\delta+\epsilon)(k+n)}}{2^{k+1}}
    \le
    2^{(\delta+\epsilon)n}\sum_{k=0}^\infty2^{(\delta+\epsilon-1)k-1}
    =\frac{2^{(\delta+\epsilon)n}}{2(1-2^{\delta+\epsilon-1})}
    .
  \end{align}
  Since $\epsilon$ can be taken arbitrarily small, we obtain
  (\ref{HilbergRedundancyI}).

  Now assume that $\mathfrak{J}(2^n)\pge 0$ for all but finitely many $n$.
  By the telescope sum (\ref{ESeriesRedundancy}), we have
  $\mathfrak{J}(2^n)/2\ple\mathfrak{G}(2^n)-2^n\mathfrak{g}$ for sufficiently
  large $n$.  Hence
  \begin{align}
    \delta\le \hilberg_{n\rightarrow\infty}
    \okra{\mathfrak{G}(n)-n\mathfrak{g}}
  \end{align}
  Combining this with (\ref{HilbergRedundancyI}), we obtain
  $\hilberg_{n\rightarrow\infty}
  \okra{\mathfrak{G}(n)-n\mathfrak{g}}=\delta$.
\end{proof}

For any stationary process $(X_i)_{i=1}^\infty$ over a finite alphabet
there exists a limit
\begin{align}
  \label{EntropyRate}
  h&:=\lim_{n\rightarrow\infty}\frac{\sred\mathbb{H}(X_1^n)}{n}=\sred\mathbb{H}(X_1|X_2^\infty)
     ,
\end{align}
called the entropy rate of process $(X_i)_{i=1}^\infty$
\cite{CoverThomas06}. By (\ref{SourceCoding}),
(\ref{PPMUniversal}), and (\ref{ShannonFanoPPM}), we also have
\begin{align}
  h&=\lim_{n\rightarrow\infty}\frac{\sred\mathbb{H}_a(X_1^n)}{n}
     .
\end{align}
Moreover, for a stationary process, the mutual
information satisfies
\begin{align}
  \sred\mathbb{I}(X_1^n;X_{n+1}^{2n})&=2\sred\mathbb{H}(X_1^n)-\sred\mathbb{H}(X_1^{2n})\ge 0,
  \\
  \sred\mathbb{I}_a(X_1^n;X_{n+1}^{2n})&=2\sred\mathbb{H}_a(X_1^n)-\sred\mathbb{H}_a(X_1^{2n})\pge 0.  
\end{align}
Hence by Theorem \ref{theoHilbergRedundancy}, we obtain
\begin{align}
  \label{RedundancyMI}
  \hilberg_{n\rightarrow\infty} \kwad{\sred\mathbb{H}(X_1^n)-hn}
  &= \hilberg_{n\rightarrow\infty} \sred\mathbb{I}(X_1^n;X_{n+1}^{2n}),
  \\
  \label{RedundancyMIAlg}
  \hilberg_{n\rightarrow\infty} \kwad{\sred\mathbb{H}_a(X_1^n)-hn}
  &= \hilberg_{n\rightarrow\infty} \sred\mathbb{I}_a(X_1^n;X_{n+1}^{2n}).
\end{align}

Subsequently, we will prove the initial parts of Theorems
\ref{theoFactsWords} and \ref{theoFactsWordsAlg}, i.e., the two
versions of the theorem about facts and words. The probabilistic
statement for strongly nonergodic processes goes first.
\begin{theorem}[facts and mutual information I]
  \label{theoFactsMI}
  Let $(X_i)_{i=1}^\infty$ be a stationary strongly nonergodic process
  over a finite alphabet.  We have inequality
  \begin{align}
    \hilberg_{n\rightarrow\infty} \sred \card U(X_1^n) &\le
    \hilberg_{n\rightarrow\infty} \sred\mathbb{I}(X_1^n;X_{n+1}^{2n})
                                                         .
    \label{FactsMI}
 \end{align}
\end{theorem}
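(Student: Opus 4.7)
My plan is to bound $\sred\mathbb{I}(X_1^n;X_{n+1}^{2n})$ from below by a Fano-type quantity involving the estimator $\hat Z^L(X_1^n):=(g(1;X_1^n),\dots,g(L;X_1^n))$ of $Z^L$, then relate the resulting bound to $\sred\card U(X_1^n)$ via a layer-cake representation, and finally take Hilberg exponents.

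First, for each $L\in\mathbb{N}$, $\hat Z^L(X_1^n)$ is a deterministic function of $X_1^n$ and, by definition of $U$, satisfies $P(\hat Z^L(X_1^n)=Z^L)=1-e_L(n)$ where $e_L(n):=P(\card U(X_1^n)<L)$. Viewing $Z^L$ as $\{0,1\}^L$-valued and absorbing the ``don't-know'' symbol $2$ into the error event, Fano's inequality gives $\sred\mathbb{H}(Z^L\mid X_1^n)\le 1+e_L(n)\,L\log 2$. Combining the chain-rule identity $\sred\mathbb{I}(X_1^n;X_{n+1}^{2n})=\sred\mathbb{I}(X_1^n;X_{n+1}^{2n},Z^L)-\sred\mathbb{I}(X_1^n;Z^L\mid X_{n+1}^{2n})$ with $\sred\mathbb{I}(X_1^n;X_{n+1}^{2n},Z^L)\ge\sred\mathbb{I}(X_1^n;Z^L)=L\log 2-\sred\mathbb{H}(Z^L\mid X_1^n)$, $\sred\mathbb{I}(X_1^n;Z^L\mid X_{n+1}^{2n})\le\sred\mathbb{H}(Z^L\mid X_{n+1}^{2n})$, and the stationarity identity $\sred\mathbb{H}(Z^L\mid X_{n+1}^{2n})=\sred\mathbb{H}(Z^L\mid X_1^n)$, I obtain the core bound
\begin{align*}
\sred\mathbb{I}(X_1^n;X_{n+1}^{2n})\ge L\log 2\,(1-2e_L(n))-2,\quad L\in\mathbb{N}.
\end{align*}

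Second, I would tune $L$ as a function of $n$ and compare with $\sred\card U(X_1^n)$. Since $l\mapsto e_l(n)$ is nondecreasing, the layer cake gives $\sred\card U(X_1^n)=\sum_{l\ge 1}(1-e_l(n))=\int_0^1 L_\beta(n)\,d\beta$ where $L_\beta(n):=\max\{l:e_l(n)\le\beta\}$. For every $\beta<1/2$, the core bound applied at $L=L_\beta(n)$ yields $L_\beta(n)\le[\sred\mathbb{I}(X_1^n;X_{n+1}^{2n})+2]/[(1-2\beta)\log 2]$, so integrating against $d\beta$ over a fixed subinterval such as $[0,1/4]$ controls $\int_0^{1/4}L_\beta(n)\,d\beta$ by a constant multiple of $\sred\mathbb{I}(X_1^n;X_{n+1}^{2n})+2$. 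The complementary tail $\int_{1/4}^1 L_\beta(n)\,d\beta$ is controlled either by the elementary estimate $L_\beta(n)(1-\beta)\le\sred\card U(X_1^n)$ combined with an optimised cutoff near $\beta=1$, or, more cleanly, by using the Hilberg redundancy equality (\ref{RedundancyMI}) to replace $\sred\mathbb{I}$ by the excess entropy $\sred\mathbb{H}(X_1^n)-hn$ and bounding the latter below by $\sred\mathbb{I}(X_1^n;Z^L)$ via the fact that $Z^L$ lies in the shift-invariant $\sigma$-field. The resulting comparison $\sred\card U(X_1^n)\log 2\le C\,\sred\mathbb{I}(X_1^n;X_{n+1}^{2n})+o(\sred\card U(X_1^n))$, together with the insensitivity of $\hilberg$ to constant factors and sub-polynomial corrections, then gives the inequality of the theorem.

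The main obstacle is precisely this tail control. The Fano bound from the first step becomes vacuous as soon as $e_L(n)\ge 1/2$, so a separate argument is required to prevent indices $l$ at which $1-e_l(n)$ is intermediate (neither close to $0$ nor close to $1$) from dominating $\sred\card U$. It is here that the strong nonergodicity hypothesis must enter substantively: the condition $P(g(l;X_1^n)\ne Z_l)\to 0$ for each fixed $l$ forces $e_l(n)$ away from intermediate values as $n$ grows, ruling out the pathological configurations in which a single-scale Fano comparison would fail. Making this tail control rigorous, via either a multi-scale Fano argument or the excess-entropy route, is the delicate point of the proof.
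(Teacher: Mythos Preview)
Your core Fano bound in the first step is correct, and the excess-entropy route you sketch---replacing $\sred\mathbb{I}(X_1^n;X_{n+1}^{2n})$ by $R_n:=\sred\mathbb{H}(X_1^n)-hn$ via (\ref{RedundancyMI}) and using that $Z^L$ is measurable with respect to $X_{n+1}^\infty$---is indeed the right reduction. But the tuning step has a genuine gap that neither of your proposed fixes closes. From Fano you obtain, for every deterministic $L$, the inequality $L\,p_L\log 2\le R_n+1$ with $p_L:=P(\card U(X_1^n)\ge L)$, and you want to control $\sred\card U(X_1^n)=\sum_L p_L$. These constraints do \emph{not} force $\sum_L p_L\lesssim R_n$: profiles such as $p_L\approx\min(1,M/L)$ over a range $L\le N$ satisfy $\sup_L Lp_L\approx M$ while $\sum_L p_L\approx M\log(N/M)$, and the logarithmic factor can be made as large as you like. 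Your ``elementary estimate'' $L_\beta(1-\beta)\le\sred\card U$ only bounds $L_\beta$ back in terms of the quantity you are trying to control, and the observation that $e_l(n)\to 0$ for each \emph{fixed} $l$ says nothing about the $l$ that grow with $n$, which is exactly where the tail lives. So as written there is no way to pass from your per-$L$ bound to $\sred\card U\lesssim R_n$.

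The idea you are missing, and which the paper uses, is to abandon deterministic $L$ altogether and set $L=S_n:=\card U(X_1^n)$ \emph{random}. By definition of $U$, the string $Z_1^{S_n}=(g(1;X_1^n),\dots,g(S_n;X_1^n))$ is a deterministic function of $(X_1^n,S_n)$, so $\sred\mathbb{H}(Z_1^{S_n}\mid X_1^n,S_n)=0$ with \emph{no} Fano error term at all; and since the unconditional law of $Z_1^s$ is uniform on $\{0,1\}^s$, one gets $\sred\mathbb{H}(Z_1^{S_n}\mid S_n)\ge(\log 2)\,\sred S_n-\sred\mathbb{H}(S_n)$. The only price for the random length is $\sred\mathbb{H}(S_n)$, which by a maximum-entropy bound is at most $\log(\sred S_n+1)+1$, negligible for Hilberg exponents. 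Chaining through $\sred\mathbb{H}(X_1^n)-\sred\mathbb{H}(X_1^n\mid Z_1^\infty)\le R_n$ then yields $R_n\ge(\log 2)\,\sred S_n-O(\log\sred S_n)$ directly, with no tuning and no tail to control.
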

\begin{proof}
  Let us write $S_n:=\card U(X_1^n)$. Observe that
  \begin{align}
  \sred\mathbb{H}(Z_1^{S_n}|S_n)&=-\sum_{s,w} P(S_n=s,Z_1^s=w)\log P(Z_1^s=w|S_n=s)
  \nonumber\\  
  &\ge -\sum_{s,w} P(S_n=s,Z_1^s=w)\log \frac{P(Z_1^s=w)}{P(S_n=s)}
  \nonumber\\  
  &= -\sum_{s,w} P(S_n=s,Z_1^s=w)\log \frac{2^{-s}}{P(S_n=s)}
  \nonumber\\  
                  &= (\log 2)\sred S_n-\sred\mathbb{H}(S_n),
                    \label{HZS}
  \\
  \sred\mathbb{H}(S_n)&\le (\sred S_n+1)\log(\sred S_n+1)-\sred S_n\log\sred S_n
  \nonumber\\
  &=\log(\sred S_n+1)+\sred S_n\log\frac{\sred S_n+1}{\sred S_n}
  \nonumber\\
                  &\le \log(\sred S_n+1)+1,
                    \label{HS}
\end{align}
where the second row of inequalities follows by the maximum entropy
bound from \cite[Lemma 13.5.4]{CoverThomas06}. Hence, by the
inequality
\begin{align}
  \label{DPI}
  \sred\mathbb{H}(X|Y)\le \sred\mathbb{H}(X|f(Y))
\end{align}
 for a measurable function $f$, we obtain that
\begin{align}
  \sred\mathbb{H}(X_1^n)-\sred\mathbb{H}(X_1^n|Z_1^\infty)
  &\ge \sred\mathbb{H}(X_1^n|S_n)-\sred\mathbb{H}(X_1^n|Z_1^\infty,S_n)-\sred\mathbb{H}(S_n)
  \nonumber\\
  &\ge \sred\mathbb{H}(X_1^n|S_n)-\sred\mathbb{H}(X_1^n|Z_1^{S_n},S_n)-\sred\mathbb{H}(S_n)
  \nonumber\\
  &= \sred\mathbb{I}(X_1^n;Z_1^{S_n}|S_n)-\sred\mathbb{H}(S_n)
  \nonumber\\
  &\ge \sred\mathbb{H}(Z_1^{S_n}|S_n)-\sred\mathbb{H}(Z_1^{S_n}|X_1^n,S_n)-\sred\mathbb{H}(S_n)
  \nonumber\\
  &= \sred\mathbb{H}(Z_1^{S_n}|S_n)-\sred\mathbb{H}(S_n)
  \nonumber\\
  &\ge (\log 2)\sred S_n-2\sred\mathbb{H}(S_n)
  \nonumber\\  
  &\ge (\log 2)\sred S_n-2\kwad{\log(\sred S_n+1)+1}
    \label{IXZ}
    .
\end{align}
Now we observe that
\begin{align}
  \sred\mathbb{H}(X_1^n|Z_1^\infty)\ge
  \sred\mathbb{H}(X_1^n|X_{n+1}^\infty)=
  hn
\end{align}
since the sequence of random variables $Z_1^\infty$ is a measurable
function of the sequence of random variables $X_{n+1}^\infty$, as
shown in \cite{Debowski09,Debowski11b}. Hence we have
\begin{align}
  \sred\mathbb{H}(X_1^n)-\sred\mathbb{H}(X_1^n|Z_1^\infty)\le
  \sred\mathbb{H}(X_1^n)-hn.
  \label{IXZHh}
\end{align}
By inequalities (\ref{IXZ}) and (\ref{IXZHh}) and equality
(\ref{RedundancyMI}), we obtain inequality (\ref{FactsMI}).
\end{proof}

The algorithmic version of the theorem about facts and words follows
roughly the same idea, with some necessary adjustments.
\begin{theorem}[facts and mutual information II]
  \label{theoFactsMIAlg}
  Let $(X_i)_{i=1}^\infty$ be a stationary process over a finite
  alphabet.  We have inequality
  \begin{align}
    \hilberg_{n\rightarrow\infty} \sred \card U_a(X_1^n) &\le
    \hilberg_{n\rightarrow\infty} \sred\mathbb{I}_a(X_1^n;X_{n+1}^{2n})
                                                         .
    \label{FactsMIAlg}
 \end{align}
\end{theorem}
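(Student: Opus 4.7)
The proof will follow the blueprint of Theorem \ref{theoFactsMI}, replacing Shannon quantities with their algorithmic counterparts. Setting $S_n:=\card U_a(X_1^n)$, I would aim to establish the pointwise inequality
\begin{align}
  \mathbb{H}_a(X_1^n)-\mathbb{H}_a(X_1^n|z_1^\infty)
  \pge (\log 2) S_n - 2\mathbb{H}_a(S_n)
  \label{ProposalI}
\end{align}
together with the expectation bound
\begin{align}
  \sred \mathbb{H}_a(X_1^n|z_1^\infty) \ge hn,
  \label{ProposalII}
\end{align}
which are the algorithmic analogues of (\ref{IXZ}) and (\ref{IXZHh}) in the proof of Theorem \ref{theoFactsMI}. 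Combining (\ref{ProposalI}) and (\ref{ProposalII}), taking expectations, and using the bound $\sred\mathbb{H}_a(S_n)\ple 2\log(\sred S_n+1)$ which follows from the standard prefix-free integer code and Jensen's inequality, one obtains $\sred\mathbb{H}_a(X_1^n)-hn \pge (\log 2)\sred S_n - 4\log(\sred S_n+1)$. Since the trivial inequality $\sred S_n\le \sred\mathbb{H}_a(X_1^n)/\log 2$ forces $\sred S_n=O(n)$, the logarithmic correction has vanishing Hilberg exponent, and identity (\ref{RedundancyMIAlg}) converts the estimate into the desired inequality (\ref{FactsMIAlg}).

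The heart of the argument is the algorithmic derivation of (\ref{ProposalI}). Since $(z_k)_{k=1}^\infty$ is Martin-L\"of random, we have $\mathbb{H}_a(z_1^{S_n})\pge S_n\log 2$; combined with the chain rule $\mathbb{H}_a(z_1^{S_n})\peq \mathbb{H}_a(S_n)+\mathbb{H}_a(z_1^{S_n}|S_n,K(S_n))$ this yields the algorithmic analogue of (\ref{HZS}), namely $\mathbb{H}_a(z_1^{S_n}|S_n)\pge S_n\log 2-\mathbb{H}_a(S_n)$. The string $z_1^{S_n}$ is computable from $(X_1^n,S_n)$ because $g$ is computable, and trivially from $(z_1^\infty,S_n)$ by reading the first $S_n$ bits of the oracle. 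Repeating the chain of inequalities displayed in (\ref{IXZ}) with algorithmic entropies in place of Shannon entropies, and with the monotonicity $\mathbb{H}_a(X|Y)\ple \mathbb{H}_a(X|f(Y))$ for computable $f$ playing the role of (\ref{DPI}), then produces (\ref{ProposalI}).

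Ingredient (\ref{ProposalII}) is actually easier than its probabilistic counterpart: whereas the proof of Theorem \ref{theoFactsMI} hinges on the delicate measurability fact that $Z_1^\infty$ is a function of $X_{n+1}^\infty$, for (\ref{ProposalII}) it suffices to observe that $x_1^n\mapsto K(x_1^n|z_1^\infty)$ is a prefix-free codelength for any fixed oracle $z_1^\infty$, so the standard source-coding inequality relativized to $z_1^\infty$ gives $\sred\mathbb{H}_a(X_1^n|z_1^\infty)\ge H(X_1^n)\ge hn$. The principal obstacle I expect is the meticulous bookkeeping of the additive corrections in the algorithmic chain-rule identities of \cite[Chap.~3]{LiVitanyi08}; all such corrections must be controlled by $\log\sred S_n$ so that they remain invisible at the level of Hilberg exponents.
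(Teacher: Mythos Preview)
Your proposal is correct and follows essentially the same route as the paper's proof: the same definition of $S_n$, the same use of Martin-L\"of randomness to lower-bound $\mathbb{H}_a(z_1^{S_n})$, the same observation that $z_1^{S_n}$ is computable from $(X_1^n,S_n)$ via $g$, the same source-coding argument for $\sred\mathbb{H}_a(X_1^n|z_1^\infty)\ge\sred\mathbb{H}(X_1^n)\ge hn$, and the same finish via (\ref{RedundancyMIAlg}). The only bookkeeping item you have not made explicit---and which is precisely the ``meticulous'' correction you anticipate---is an additional term $\mathbb{H}_a(K(z_1^{S_n}))\ple 2\log(S_n+1)$ that enters when passing from $\mathbb{H}_a(X_1^n|S_n)-\mathbb{H}_a(X_1^n|z_1^{S_n},S_n)$ to $\mathbb{I}_a(X_1^n;z_1^{S_n}|S_n)$ via the algorithmic chain rule, so the paper ends up with $-6\log(S_n+1)$ rather than your $-2\mathbb{H}_a(S_n)$; this does not affect the Hilberg exponents.
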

\begin{proof}
  Let us write $S_n:=\card U_a(X_1^n)$. Observe that
  \begin{align}
    \mathbb{H}_a(z_1^{S_n}|S_n)&\pge \mathbb{H}_a(z_1^{S_n})-\mathbb{H}_a(S_n)
                        \nonumber\\  
                      &\peq (\log 2) S_n-C-\mathbb{H}_a(S_n),
                        \label{HZSAlg}
    \\
  \mathbb{H}_a(S_n)&\ple 2\log( S_n+1),
                    \label{HSAlg}
    \\
  \mathbb{H}_a(K(z_1^{S_n}))&\ple 2\log( K(z_1^{S_n})+1)
  \nonumber\\
                      &\ple 2\log( S_n+1),
                    \label{HKZSAlg}
\end{align}
where the first row of inequalities follows by the algorithmic
randomness of $z_1^\infty$, whereas the second and the third row of
inequalities follow by the bounds $K(n)\ple 2\log_2 (n+1)$ for
$n\ge 0$ and $K(z_1^k)\ple 2k$. Moreover, for any a computable
function $f$ there exists a constant $C_f\ge 0$ such that
\begin{align}
  \label{DPIAlg}
  \mathbb{H}_a(x|y)\ple \mathbb{H}_a(x|f(y))+C_f
  .
\end{align}
Hence, we obtain that
\begin{align}
  \mathbb{H}_a(X_1^n)-\mathbb{H}_a(X_1^n|z_1^\infty)
  &\pge \mathbb{H}_a(X_1^n|S_n)-\mathbb{H}_a(X_1^n|z_1^\infty,S_n)-\mathbb{H}_a(S_n)
  \nonumber\\
  &\pge \mathbb{H}_a(X_1^n|S_n)-\mathbb{H}_a(X_1^n|z_1^{S_n},S_n)-\mathbb{H}_a(S_n)
  \nonumber\\
  &\pge \mathbb{I}_a(X_1^n;z_1^{S_n}|S_n)-\mathbb{H}_a(K(z_1^{S_n}))-\mathbb{H}_a(S_n)
  \nonumber\\
  &\pge \mathbb{H}_a(z_1^{S_n}|S_n)-\mathbb{H}_a(z_1^{S_n}|X_1^n,K(X_1^n),S_n)
    \nonumber\\
  &\qquad -\mathbb{H}_a(K(z_1^{S_n}))-\mathbb{H}_a(S_n)
  \nonumber\\
  &\pge \mathbb{H}_a(z_1^{S_n}|S_n)-C_g-\mathbb{H}_a(K(z_1^{S_n}))-\mathbb{H}_a(S_n)
  \nonumber\\
  &\pge (\log 2) S_n-C-C_g-\mathbb{H}_a(K(z_1^{S_n}))-2\mathbb{H}_a(S_n)
  \nonumber\\  
  &\pge (\log 2) S_n-6\log( S_n+1)-C-C_g
    .
\end{align}
Since $-\sred\log( S_n+1)\ge -\log(\sred S_n+1)$ by the Jensen
inequality then
\begin{align}
  \sred\mathbb{H}_a(X_1^n)-\sred\mathbb{H}_a(X_1^n|z_1^\infty)\pge
  (\log 2)\sred S_n-6\log(\sred S_n+1)-C-C_g
    \label{IXZAlg}
    .
\end{align}
Now we observe that
\begin{align}
  \sred\mathbb{H}_a(X_1^n|z_1^\infty)\ge \sred\mathbb{H}(X_1^n)\ge hn
\end{align}
since the conditional prefix-free Kolmogorov complexity with the
second argument fixed is the length of a prefix-free code. Hence we
have
\begin{align}
  \sred\mathbb{H}_a(X_1^n)-\sred\mathbb{H}_a(X_1^n|z_1^\infty)\le
  \sred\mathbb{H}_a(X_1^n)-hn.
  \label{IXZHhAlg}
\end{align}
By inequalities (\ref{IXZAlg}) and (\ref{IXZHhAlg}) and equality
(\ref{RedundancyMIAlg}), we obtain inequality (\ref{FactsMIAlg}).
\end{proof}
  
\section{Mutual information and PPM words}
\label{secMIWords}

In this appendix, we will investigate some algebraic properties of the
length of the PPM code to be used for proving the second part of the
theorem about facts and words.  First of all, it can be seen that
\begin{align}
  \label{PPMfirst}
  \mathbb{H}_{\PPM_k}(x_1^n)=
  \begin{cases}
    \displaystyle n\log D, & k=-1,
    \\
    \displaystyle k\log D
    +\sum_{u\in\mathbb{X}^k}\log\frac{(N(u|x_1^{n-1})+D-1)!}{(D-1)!\prod_{a=1}^D
      N(ua|x_1^n)!}, & k\ge 0.
  \end{cases}
\end{align}
Expression (\ref{PPMfirst}) can be further rewritten using notation
\begin{align}
  \log^* n&:=
  \begin{cases}
    0, & n=0,
    \\
    \log n!-n\log n+n, & n\ge 1,      
  \end{cases}
  \\
  \mathfrak{H}(n_1,...,n_l)&:=
                             \begin{cases}
                               \sum_{i=1:n_i>0}^l n_i
                               \log \okra{\frac{\sum_{j=1}^l n_j}{n_i}},
                               & \text{if $n_j>0$ exists},
                               \\
                               0, & \text{else},
                             \end{cases}
  \\
  \mathfrak{K}(n_1,...,n_l)&:=
                             \sum_{i=1}^l \log^* n_i-\log^*\okra{\sum_{i=1}^l
                             n_i}.
\end{align}
Then, for $k\ge 0$, we define
\begin{align}
  \mathbb{H}_{\PPM^0_k}(x_1^n)&:=
    \sum_{u\in\mathbb{X}^k}
    \mathfrak{H}\okra{
    {N(u1|x_1^{n})}
    ,...,
    {N(uD|x_1^{n})}
    }
    ,
    \\
  \mathbb{H}_{\PPM^1_k}(x_1^n)&:=
    \sum_{u\in\mathbb{X}^k}
    \mathfrak{H}\okra{
      {N(u|x_1^{n-1})}
      ,
      {D-1}
    }
  \nonumber\\
  &\quad
   -\sum_{u\in\mathbb{X}^k} \mathfrak{K}\okra{
    {N(u1|x_1^{n})}
    ,...,
    {N(uD|x_1^{n})}
    ,
    {D-1}
    }
    .
\end{align}
As a result for $k\ge 0$ we obtain
\begin{align}
  \label{PPMexact}
  \mathbb{H}_{\PPM_k}(x_1^n)&=k\log D+ \mathbb{H}_{\PPM^0_k}(x_1^n)+
                              \mathbb{H}_{\PPM^1_k}(x_1^n).
\end{align}
In the following, we will analyze the terms on the right-hand side of
(\ref{PPMexact}).

\begin{theorem}
  \label{theoPPMbound}
  For $k\ge 0$ and $n\ge 1$, we have
  \begin{align}
    \label{PPMboundI}
    \tilde D\card V(k|x_1^{n-1})
    &\le \mathbb{H}_{\PPM^1_k}(x_1^n)< D\card V(k|x_1^{n-1})\okra{2+\log n}.
  \end{align}
  where $\tilde D:=-D\log\okra{D^{-1}}!>0$.
\end{theorem}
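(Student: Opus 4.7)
The plan is to bound the summand contributed by each $u\in\mathbb{X}^k$ to $\mathbb{H}_{\PPM^1_k}(x_1^n)=\sum_u T(u)$, where I set $T(u):=\mathfrak{H}(m_u,D-1)-\mathfrak{K}(m_{u1},\ldots,m_{uD},D-1)$ with $m_u:=N(u|x_1^{n-1})$ and $m_{ua}:=N(ua|x_1^n)$. A direct check using the definitions shows $T(u)=0$ whenever $m_u=0$, since then all $m_{ua}=0$ as well, $\mathfrak{H}(0,D-1)=0$, and $\mathfrak{K}(0,\ldots,0,D-1)=\log^*(D-1)-\log^*(D-1)=0$. Hence the sum reduces to $u\in V(k|x_1^{n-1})$ and it suffices to establish the pointwise bounds $\tilde D\le T(u)<D(2+\log n)$ uniformly over $V(k|x_1^{n-1})$.

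Before tackling either bound I would recast $T(u)$ into an algebraically transparent form. The splitting $\mathfrak{K}(m_1,\ldots,m_D,D-1)=\mathfrak{K}(m_1,\ldots,m_D)+\mathfrak{K}(m,D-1)$ (immediate from the definition of $\mathfrak{K}$, with $m=\sum_a m_a$) and the chain rule $\mathfrak{H}(m_1,\ldots,m_D,D-1)=\mathfrak{H}(m_1,\ldots,m_D)+\mathfrak{H}(m,D-1)$, combined with the identity
\[
\log\binom{m+D-1}{D-1}=\mathfrak{H}(m,D-1)-\mathfrak{K}(m,D-1),
\]
which one verifies by substituting $\log^* n=\log n!-n\log n+n$ and cancelling, let me rewrite
\[
T(u)=\log\binom{m_u+D-1}{D-1}-\mathfrak{K}(m_{u1},\ldots,m_{uD}).
\]

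For the upper bound I would show that $\log^*$ is superadditive on $\mathbb{Z}_{\ge 0}$, i.e.\ $\log^*(a)+\log^*(b)\ge\log^*(a+b)$: this is equivalent to $(a+b)^{a+b}\ge\binom{a+b}{a}a^ab^b$, which is one term in the binomial expansion of the left-hand side. Iterating gives $\mathfrak{K}(m_{u1},\ldots,m_{uD})=\sum_a\log^* m_{ua}-\log^* m_u\ge 0$, hence $T(u)\le\log\binom{m_u+D-1}{D-1}$. I then bound this binomial coefficient by $D(2+\log n)$ in two cases: if $D-1\le n$, use $\binom{m_u+D-1}{D-1}\le(m_u+D-1)^{D-1}/(D-1)!\le(2n)^{D-1}$; if $D-1>n$, use instead $\binom{m_u+D-1}{D-1}\le 2^{m_u+D-1}\le 4^{D-1}$. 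In either case the strict inequality $T(u)<D(2+\log n)$ holds for $n\ge 1$.

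For the lower bound, combining the rewriting above with the analogous identity $\log\binom{m}{m_1,\ldots,m_D}=\mathfrak{H}(m_1,\ldots,m_D)-\mathfrak{K}(m_1,\ldots,m_D)$ gives
\[
T(u)=\log\binom{m+D-1}{m_1,\ldots,m_D,D-1}-mH(p),\qquad p_a:=m_a/m,
\]
where $H(p)=-\sum_a p_a\log p_a$. The inequality $T(u)\ge-D\log\Gamma(1+1/D)$ then becomes the combinatorial estimate
\[
\binom{m+D-1}{m_1,\ldots,m_D,D-1}\prod_a m_a^{m_a}\ge\Gamma(1+1/D)^{-D}\,m^m
\]
(with $0^0:=1$), which I would prove by combining Stirling's lower bound on $(m+D-1)!$ with the Dirichlet-type identity
\[
\sum_{(m_a):\,\sum_a m_a=m}\prod_{a=1}^D\frac{\Gamma(m_a+1/D)}{m_a!}=\Gamma(1/D)^D,
\]
extracted as the coefficient of $q^m$ in $\prod_{a=1}^D\Gamma(1/D)(1-q)^{-1/D}=\Gamma(1/D)^D(1-q)^{-1}$, together with the duplication $\Gamma(1/D)=D\,\Gamma(1+1/D)$. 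Summing $T(u)$ over $u\in V(k|x_1^{n-1})$ then yields both inequalities of the theorem.

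The main obstacle I anticipate is the lower bound: one must land on the sharp constant $\tilde D=-D\log\Gamma(1+1/D)$, which requires combining the Stirling corrections on $(m+D-1)!$ and the $m_a!$ with the Dirichlet identity without picking up extra $\log m$ factors, and some care is needed for small $m$ or boundary configurations (many $m_a=0$). The upper bound, by contrast, is essentially routine once superadditivity of $\log^*$ is established.
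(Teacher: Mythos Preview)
Your reduction to $u\in V(k|x_1^{n-1})$ and the rewriting
\[
T(u)=\log\binom{m_u+D-1}{D-1}-\mathfrak{K}(m_{u1},\ldots,m_{uD})
=\log\binom{m_u+D-1}{m_{u1},\ldots,m_{uD},D-1}-m_uH(p)
\]
are both correct and pleasant. The upper bound is fine and matches the paper's argument in spirit: both use $\mathfrak{K}(m_1,\ldots,m_D)\ge 0$ (your inequality $\log^*(a)+\log^*(b)\ge\log^*(a+b)$ is right, though the standard name for it is \emph{sub}additivity, not superadditivity) and then crudely bound the remaining binomial/entropy term. The paper bounds $\mathfrak{H}(N,D-1)$ directly via $\log(1+x)\le x$ rather than splitting on $D-1\lessgtr n$, but the outcome is the same.

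The lower bound, however, has a real gap. Your Dirichlet-type identity
\[
\sum_{m_1+\cdots+m_D=m}\ \prod_{a=1}^D\frac{\Gamma(m_a+1/D)}{m_a!}=\Gamma(1/D)^D
\]
is a statement about a \emph{sum} over all compositions of $m$; it gives at best that each summand is at most $\Gamma(1/D)^D$, which is an upper bound on $\prod_a\Gamma(m_a+1/D)/m_a!$, not the pointwise lower bound on $T(u)$ that you need. You have not explained how to pass from this summation identity to the individual inequality $T(u)\ge\tilde D$ for a fixed $(m_1,\ldots,m_D)$, and Stirling corrections alone will not manufacture that passage. You yourself flag this as the main obstacle, and indeed it is: the proposal does not contain a mechanism for controlling the worst composition.

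The paper's device is exactly what is missing from your outline. It observes that $\log^*$ is \emph{concave}, so by Jensen $\sum_a\log^* m_a\le D\log^*(m/D)$; this bounds $\mathfrak{K}(m_1,\ldots,m_D,D-1)$ from above by its value at the equal-parts configuration $m_a=m/D$, reducing the problem to a single-variable function
\[
\mathfrak{F}(m,D)=\log\frac{(m+D-1)!}{(D-1)!\,\bigl((m/D)!\bigr)^D D^{\,m}}.
\]
One then checks $\mathfrak{F}(m,D)\ge 0$ by an elementary product inequality and that $\mathfrak{F}$ is increasing in $m$, so $\mathfrak{F}(m,D)\ge\mathfrak{F}(1,D)=-D\log(1/D)!=\tilde D$. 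The sharp constant drops out automatically at $m=1$; no Stirling asymptotics or generating-function identities are needed. I recommend replacing your lower-bound sketch with this Jensen/concavity argument.
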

\begin{proof}
  Observe that $\mathfrak{H}(0,D-1)=\mathfrak{K}(0,...,0,D-1)=0$.
  Hence the summation in $\mathbb{H}_{\PPM^1_k}(x_1^n)$ can be restricted to
  $u\in\mathbb{X}^k$ such that $N(u|x_1^{n-1})\ge 1$. Consider such a
  $u$ and write $N=N(u|x_1^{n-1})$ and $N_a=N(ua|x_1^n)$.

  Since $\mathfrak{H}(n_1,...,n_l)\ge 0$ and
  $\mathfrak{K}(n_1,...,n_l)\ge 0$ (the second inequality follows by
  subadditivity of $\log^* n$), we obtain first
  \begin{align}
    &\mathfrak{H}\okra{{N},{D-1}}-
      \mathfrak{K}\okra{{N_1},...,{N_D},{D-1}}
    \nonumber\\
    &\quad\le  \mathfrak{H}\okra{{N},{D-1}}
    \nonumber\\
    &\quad= N\log\okra{1+\frac{D-1}{N}}+(D-1)\log\okra{1+\frac{N}{D-1}}
    \nonumber\\
    &\quad\le N\cdot\frac{D-1}{N}+(D-1)\log\okra{1+\frac{N}{D-1}}
    \nonumber\\
    &\quad= (D-1)\kwad{1+\log\okra{1+\frac{N}{D-1}}}
    < D\okra{2+\log n}
    ,
    \label{PPMboundUpper}
  \end{align}
  where we use $\log(1+x)\le x$ and $N< n$.  On the other hand,
  function $\log^* n$ is concave so by $\sum_{a=1}^D N_a=N$ and the
  Jensen inequality for $\log^* n$ we obtain
\begin{align}
  &\mathfrak{H}\okra{{N},{D-1}}-\mathfrak{K}\okra{{N_1},...,{N_D},{D-1}}
  \nonumber\\
  &\quad\ge \mathfrak{F}\okra{{N},{D}}:=
    N\log\okra{1+\frac{D-1}{N}}+(D-1)\log\okra{1+\frac{N}{D-1}}
  \nonumber\\
  &\qquad\qquad\qquad
    +\log^*(N+D-1)-\log^*(D-1)-D\log^*\okra{N/D}
  \nonumber\\
  &\quad=
    \log(N+D-1)!-\log(D-1)!-D\log\okra{N/D}!-N\log D
  \nonumber\\
  &\quad=
    \log\frac{(N+D-1)!}{(D-1)!\okra{N/D}!^D D^N}
    \ge 0
\end{align}
since
\begin{align}
  \okra{N/D}!^D D^N
  &=N^D(N-D)^D(N-2D)^D\cdot...\cdot D^D
  \nonumber\\
  &\le
  (N+D-1)(N+D-2)\cdot...\cdot D
  =\frac{(N+D-1)!}{(D-1)!}
  .
\end{align}
Moreover, function $\mathfrak{F}\okra{{N},{D}}$ is growing in argument
$N$. Hence
\begin{align}
  \mathfrak{F}\okra{{N},{D}}\ge \mathfrak{F}\okra{{1},{D}}=
  -D\log\okra{D^{-1}}!
  .
  \label{PPMboundLower}
\end{align}
Summing inequalities (\ref{PPMboundUpper}) and (\ref{PPMboundLower})
over $u\in\mathbb{X}^k$ such that $N(u|x_1^{n})\ge 1$, we obtain the
claim.
\end{proof}

The mutual information is defined as a difference of entropies.
Replacing the entropy with an arbitrary function $\mathbb{H}_Q(u)$, we
obtain this quantity:
\begin{definition}
  The \emph{$Q$ pointwise mutual information} is defined as
  \begin{align}
    \mathbb{I}_Q(u;v):=
    \mathbb{H}_Q(u)+\mathbb{H}_Q(v)
    -\mathbb{H}_Q(uv)
    .
  \end{align}
\end{definition}
We will show that the $\PPM^0_k$ pointwise mutual information cannot
be positive.
\begin{theorem}
  \label{theoMathFrakH}
  For $n_i=\sum_{j=1}^l n_{ij}$, where $n_{ij}\ge 0$, we have
  \begin{align}
    \mathfrak{H}(n_1,...,n_k)\ge \sum_{j=1}^l
    \mathfrak{H}(n_{1j},...,n_{kj}).
  \end{align}
\end{theorem}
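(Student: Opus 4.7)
The plan is to recognize $\mathfrak{H}$ as an unnormalized Shannon entropy and reduce the claim to the standard concavity of entropy (equivalently, the log-sum inequality). Setting $N:=\sum_i n_i$ and $M_j:=\sum_i n_{ij}$, I would first rewrite
\begin{align*}
\mathfrak{H}(n_1,\ldots,n_k) &= N \cdot H\okra{n_1/N,\ldots,n_k/N},
\\
\mathfrak{H}(n_{1j},\ldots,n_{kj}) &= M_j \cdot H\okra{n_{1j}/M_j,\ldots,n_{kj}/M_j},
\end{align*}
where $H$ denotes the ordinary Shannon entropy in nats and the convention $0\log 0=0$ absorbs any vanishing entries. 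The hypothesis $n_i=\sum_j n_{ij}$ yields $N=\sum_j M_j$, so the row distribution $(n_i/N)_i$ is a convex combination of the column-normalized distributions $(n_{ij}/M_j)_i$ with weights $M_j/N$, i.e.\ $n_i/N = \sum_j (M_j/N)(n_{ij}/M_j)$. Concavity of $H$ then gives that $H$ of the mixture is at least $\sum_j (M_j/N)\, H$ of the components, and multiplying through by $N$ is exactly the inequality claimed.

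If an algebraic derivation is preferred, the cleaner route is to expand directly
\begin{align*}
\mathfrak{H}(n_1,\ldots,n_k) - \sum_{j=1}^l \mathfrak{H}(n_{1j},\ldots,n_{kj})
= \sum_{i,j} n_{ij} \log\frac{N\, n_{ij}}{n_i\, M_j},
\end{align*}
and invoke the log-sum inequality with $a_{ij}=n_{ij}$ and $b_{ij}=n_i M_j/N$; since $\sum_{i,j}a_{ij}=\sum_{i,j}b_{ij}=N$, the lower bound is $N\log(N/N)=0$. Either route settles the lemma in a few lines.

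I do not anticipate any genuine obstacle: the only care needed is the bookkeeping of degenerate cases in which some $n_i$ or $M_j$ vanish, but these are already absorbed by the $0\log 0=0$ convention built into the definition of $\mathfrak{H}$. In short, the lemma is a compact repackaging of entropy concavity, so the body of the proof is just to spell out the identifications above.
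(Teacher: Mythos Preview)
Your proposal is correct and, in its second (algebraic) route, essentially identical to the paper's proof: the paper writes $p_{ij}=n_{ij}/N$, $q_i=n_i/N$, $r_j=M_j/N$ and observes that the difference equals $N\sum_{i,j}p_{ij}\log\frac{p_{ij}}{q_ir_j}$, i.e.\ $N$ times a Kullback--Leibler divergence, hence nonnegative. Your first route via concavity of Shannon entropy is an equivalent repackaging of the same inequality.
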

\begin{proof}
  Write $N:=\sum_{i=1}^k \sum_{j=1}^l n_{ij}$, $p_{ij}:=n_{ij}/N$,
  $q_i:=\sum_{j=1}^l p_{ij}$, and $r_j:=\sum_{i=1}^k p_{ij}$. We observe
  that
  \begin{align}
    \mathfrak{H}(n_1,...,n_k)- \sum_{j=1}^l
    \mathfrak{H}(n_{1j},...,n_{kj})
    =
    N\sum_{i=1}^k \sum_{j=1}^l p_{ij}\log\frac{p_{ij}}{q_i r_j},
  \end{align}
  which is $N$ times the Kullback-Leibler divergence between
  distributions $\klam{p_{ij}}$ and $\klam{q_i r_j}$ and thus is
  nonnegative.
\end{proof}
\begin{theorem}
  \label{theoPPMZeroMI}
  For $k\ge 0$, we have
  \begin{align}
    \label{PPMZeroMI}
    \mathbb{I}_{\PPM^0_k}(x_1^n;x_{n+1}^{n+m})\le 0.
  \end{align}
\end{theorem}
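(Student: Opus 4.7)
The plan is to prove superadditivity of $\mathbb{H}_{\PPM^0_k}$ under concatenation, which is exactly what (\ref{PPMZeroMI}) says (rearranged). The whole proof rides on Theorem \ref{theoMathFrakH} once we set up the right count decomposition.

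First, I would fix $u\in\mathbb{X}^k$ and any symbol $a\in\mathbb{X}$, and compare the count of the length-$(k{+}1)$ string $ua$ in the concatenation $x_1^{n+m}$ with its counts in the two parts. Occurrences of $ua$ in $x_1^{n+m}$ split naturally into three disjoint classes by their starting position $i$: those with $1\le i\le n-k$, which lie entirely inside $x_1^n$ and therefore contribute $N(ua|x_1^n)$; those with $n+1\le i\le n+m-k$, which lie entirely inside $x_{n+1}^{n+m}$ and contribute $N(ua|x_{n+1}^{n+m})$; and the remaining straddling occurrences with $n-k+1\le i\le n$, whose number I will call $B(ua)\ge 0$. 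Thus
\begin{align}
N(ua|x_1^{n+m})=N(ua|x_1^n)+N(ua|x_{n+1}^{n+m})+B(ua).
\end{align}

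Next I would apply Theorem \ref{theoMathFrakH} with $l=3$ to the $D$-tuple $(N(u1|x_1^{n+m}),\dots,N(uD|x_1^{n+m}))$, using the above three-way decomposition coordinate by coordinate. This yields
\begin{align}
\mathfrak{H}\okra{N(u1|x_1^{n+m}),\dots,N(uD|x_1^{n+m})}
&\ge \mathfrak{H}\okra{N(u1|x_1^n),\dots,N(uD|x_1^n)}\nonumber\\
&\quad+\mathfrak{H}\okra{N(u1|x_{n+1}^{n+m}),\dots,N(uD|x_{n+1}^{n+m})}\nonumber\\
&\quad+\mathfrak{H}\okra{B(u1),\dots,B(uD)}.
\end{align}
Since $\mathfrak{H}\ge 0$, the third term can be dropped; summing the resulting inequality over $u\in\mathbb{X}^k$ gives
\begin{align}
\mathbb{H}_{\PPM^0_k}(x_1^{n+m})\ge \mathbb{H}_{\PPM^0_k}(x_1^n)+\mathbb{H}_{\PPM^0_k}(x_{n+1}^{n+m}),
\end{align}
which by the definition of $\mathbb{I}_{\PPM^0_k}$ is exactly (\ref{PPMZeroMI}).

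The only real subtlety is the boundary bookkeeping in the count decomposition — making sure every length-$(k{+}1)$ occurrence is assigned to exactly one of the three classes and that the ``inside'' classes reproduce $N(ua|x_1^n)$ and $N(ua|x_{n+1}^{n+m})$ with the same length-conventions used in the definition of $N(\cdot|\cdot)$. Once that index accounting is correct, the rest is a direct invocation of Theorem \ref{theoMathFrakH} and nonnegativity of $\mathfrak{H}$, so I do not expect any analytic difficulty.
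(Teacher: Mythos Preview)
Your proof is correct and follows essentially the same route as the paper: decompose $N(ua|x_1^{n+m})$ into the left part, the right part, and a nonnegative boundary term, apply Theorem \ref{theoMathFrakH} with $l=3$, drop the boundary $\mathfrak{H}$-term by nonnegativity, and sum over $u\in\mathbb{X}^k$. The only cosmetic difference is that the paper writes the boundary count explicitly as $N(ua|x_{n-k}^{n+k})$ rather than an abstract $B(ua)$; your formulation sidesteps any index bookkeeping issues and is otherwise identical.
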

\begin{proof}
  Consider $k\ge 0$.  For $u\in\mathbb{X}^k$ and $a\in\mathbb{X}$, we
  have
  \begin{align}
    N(ua|x_1^{n+m})&=N(ua|x_1^n)+N(ua|x_{n-k}^{n+k})+N(ua|x_{n+1}^{n+m})
    .
  \end{align}
  Thus using Theorem \ref{theoMathFrakH} we obtain
  \begin{align}
    \mathfrak{H}\okra{
      {N(u1|x_1^{n+m})}
      ,...,
      {N(uD|x_1^{n+m})}
      }
    &\ge
    \mathfrak{H}\okra{
      {N(u1|x_1^n)}
      ,...,
      {N(uD|x_1^n)}
    }
    \nonumber\\
    &+
    \mathfrak{H}\okra{
      {N(u1|x_{n-k}^{n+k})}
      ,...,
      {N(uD|x_{n-k}^{n+k})}
    }
    \nonumber\\
    &+
    \mathfrak{H}\okra{
      {N(u1|x_{n+1}^{n+m})}
      ,...,
      {N(uD|x_{n+1}^{n+m})}
    }
    .
  \end{align}
  Since the second term on the right hand side is greater than or
  equal zero, we may omit it and summing the remaining terms over all
  $u\in\mathbb{X}^k$ we obtain the claim.
\end{proof}

Now we will show that the PPM pointwise mutual information between two
parts of a string is roughly bounded above by the cardinality of the
PPM vocabulary of the string multiplied by the logarithm of the string
length. 
\begin{theorem}
  \label{theoPPMVocabMI}
  We have
  \begin{align}
    \mathbb{I}_{\PPM}(x_1^n;x_{n+1}^{n+m})&\le 1
                                            +4\log
                                            \kwad{G_{\PPM}(x_1^{n+m})+2}
                                            +\kwad{G_{\PPM}(x_1^{n+m})+1}\log D
    \nonumber\\    
    &\qquad
    +
    2D \card V_{\PPM}(x_1^{n+m})\kwad{2+\log (n+m)}.
    \label{PPMVocabMI}
  \end{align}
\end{theorem}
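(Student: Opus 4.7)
The plan is to set $G:=G_{\PPM}(x_1^{n+m})$ and reduce the total PPM mutual information to the fixed-order PPM mutual information at order $G$, which is additively decomposable via (\ref{PPMexact}). The key observation is that $\PPM=\frac{6}{\pi^2}\sum_{k=-1}^\infty(k+2)^{-2}\PPM_k$ is a convex combination of the $\PPM_k$'s, since the weights sum to one; this gives two opposite bounds. On the one hand, dropping all but the $G$-th term in the mixture yields
\begin{align*}
  \mathbb{H}_{\PPM}(u)\le\mathbb{H}_{\PPM_G}(u)+2\log(G+2)+\log(\pi^2/6)
\end{align*}
for any string $u$. On the other hand, because $G$ is defined to minimise $\mathbb{H}_{\PPM_k}(x_1^{n+m})$ over $k$, we have $\PPM_G(x_1^{n+m})\ge\PPM_k(x_1^{n+m})$ for every $k$, so $\PPM(x_1^{n+m})\le\PPM_G(x_1^{n+m})$, hence $\mathbb{H}_{\PPM}(x_1^{n+m})\ge\mathbb{H}_{\PPM_G}(x_1^{n+m})$. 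Applying the upper bound to $u=x_1^n$ and $u=x_{n+1}^{n+m}$, and the lower bound to $x_1^{n+m}$, produces
\begin{align*}
  \mathbb{I}_{\PPM}(x_1^n;x_{n+1}^{n+m})\le\mathbb{I}_{\PPM_G}(x_1^n;x_{n+1}^{n+m})+4\log(G+2)+2\log(\pi^2/6).
\end{align*}

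The next step is to split $\mathbb{I}_{\PPM_G}$ using identity (\ref{PPMexact}). The constant $G\log D$ appears in each of the three pointwise entropies and sums to $+G\log D$ in the mutual information, so
\begin{align*}
  \mathbb{I}_{\PPM_G}(x_1^n;x_{n+1}^{n+m})=G\log D+\mathbb{I}_{\PPM^0_G}(x_1^n;x_{n+1}^{n+m})+\mathbb{I}_{\PPM^1_G}(x_1^n;x_{n+1}^{n+m}).
\end{align*}
Theorem \ref{theoPPMZeroMI} directly handles the first nontrivial term: $\mathbb{I}_{\PPM^0_G}(x_1^n;x_{n+1}^{n+m})\le 0$. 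For the second, I would invoke Theorem \ref{theoPPMbound}: its upper bound applies to the two substring entropies $\mathbb{H}_{\PPM^1_G}(x_1^n)$ and $\mathbb{H}_{\PPM^1_G}(x_{n+1}^{n+m})$, while its lower bound shows that the joint term $\mathbb{H}_{\PPM^1_G}(x_1^{n+m})$ is non-negative and may simply be dropped, giving
\begin{align*}
  \mathbb{I}_{\PPM^1_G}(x_1^n;x_{n+1}^{n+m})<D\card V(G|x_1^{n-1})\kwad{2+\log n}+D\card V(G|x_{n+1}^{n+m-1})\kwad{2+\log m}.
\end{align*}

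To close the argument, I would use monotonicity of the subword set: every length-$G$ substring of $x_1^{n-1}$ or of $x_{n+1}^{n+m-1}$ is a length-$G$ substring of $x_1^{n+m}$, so $\card V(G|x_1^{n-1})+\card V(G|x_{n+1}^{n+m-1})\le 2\card V_{\PPM}(x_1^{n+m})$, while $\log n,\log m\le\log(n+m)$. Combining everything with $G\log D\le(G+1)\log D$ and the numerical inequality $2\log(\pi^2/6)<1$ (which holds since $\pi^2/6<\sqrt{e}$) produces the stated bound. I do not anticipate a genuine obstacle; the only mildly delicate point is to keep $G=G_{\PPM}(x_1^{n+m})$ fixed throughout, so that the vocabulary appearing on the right-hand side is exactly $V_{\PPM}(x_1^{n+m})$ and not some other subword set tied to the individual pieces.
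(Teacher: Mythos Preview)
Your proposal is correct and follows essentially the same route as the paper's own proof: reduce $\mathbb{I}_{\PPM}$ to $\mathbb{I}_{\PPM_G}$ at the optimal order $G=G_{\PPM}(x_1^{n+m})$ via the mixture bounds (the paper writes the constant as $1/2+2\log(k+2)$, which is your $\log(\pi^2/6)+2\log(k+2)$ rounded up), then split $\mathbb{I}_{\PPM_G}$ by (\ref{PPMexact}), kill $\mathbb{I}_{\PPM^0_G}$ with Theorem~\ref{theoPPMZeroMI}, and bound $\mathbb{I}_{\PPM^1_G}$ using both sides of Theorem~\ref{theoPPMbound} together with the inclusion $V(G|x_1^{n-1}),\,V(G|x_{n+1}^{n+m-1})\subseteq V(G|x_1^{n+m})$. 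Your arithmetic check $2\log(\pi^2/6)<1$ and the relaxation $G\log D\le (G+1)\log D$ are exactly what is needed to match the stated constants.
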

\begin{proof}
  Consider $k\ge 0$. By Theorems \ref{theoPPMbound} and
  \ref{theoPPMZeroMI} we obtain
  \begin{align}
    \mathbb{I}_{\PPM_k}(x_1^n;x_{n+1}^{n+m})
    &=
    k\log D
      +\mathbb{I}_{\PPM^0_k}(x_1^n;x_{n+1}^{n+m})
      +\mathbb{I}_{\PPM^1_k}(x_1^n;x_{n+1}^{n+m})
    \nonumber\\    
    &\le
    k\log D
    +
    D\card V(k|x_1^{n})\kwad{2+\log n}
    \nonumber\\    
    &\quad\qquad
    +
    D\card V(k|x_{n+1}^{n+m})\kwad{2+\log m}
    \nonumber\\    
    &\le k\log D+2D\card V(k|x_1^{n+m})\kwad{2+\log(n+m)}
    .   
  \end{align}
  In contrast, $\mathbb{I}_{\PPM_{-1}}(x_1^n;x_{n+1}^{n+m})=0$.
  Now let $G=G_{\PPM}(x_1^{n+m})$. Since
  \begin{align}
     \mathbb{H}_{\PPM}(x_1^{n+m})\ge \mathbb{H}_{\PPM_G}(x_1^{n+m})
  \end{align}
  and
  \begin{align}
     \mathbb{H}_{\PPM}(u)\le \mathbb{H}_{\PPM_k}(u)+ 1/2+2\log (k+2)
  \end{align}
  for any $u\in\mathbb{X}^*$ and $k\ge -1$,
  we obtain
  \begin{align}
    \mathbb{I}_{\PPM}(x_1^n;x_{n+1}^{n+m})
    &\le
      \mathbb{I}_{\PPM_G}(x_1^n;x_{n+1}^{n+m})+1+4\log (G+2)
    \nonumber\\
    &\le 1+4\log (G+2)+(G+1)\log D
    \nonumber\\
    &\qquad 
    +2D\card V(G|x_1^{n+m})\kwad{2+\log(n+m)}
    .
  \end{align}
  Hence the claim follows.
\end{proof}

Consequently, we may prove the second part of Theorems
\ref{theoFactsWords} and \ref{theoFactsWordsAlg}, i.e., the theorems
about facts and words.
\begin{theorem}[mutual information and words]
  \label{theoMIWords}
  Let $(X_i)_{i=1}^\infty$ be a stationary process over a finite
  alphabet.  We have inequalities
  \begin{align}
    \hilberg_{n\rightarrow\infty} \sred\mathbb{I}(X_1^n;X_{n+1}^{2n})
    &\le \hilberg_{n\rightarrow\infty} \sred\mathbb{I}_a(X_1^n;X_{n+1}^{2n})
    \nonumber\\
    &\le \hilberg_{n\rightarrow\infty} \sred
    \kwad{G_{\PPM}(X_1^n)+\card V_{\PPM}(X_1^n)} .
    \label{MIWords}
 \end{align}
\end{theorem}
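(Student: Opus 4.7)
The plan is to handle the two inequalities separately, chaining through the PPM code length in the second case.

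For the first inequality, I would start from (\ref{SourceCoding}), which gives $\sred\mathbb{H}(X_1^n)\le\sred\mathbb{H}_a(X_1^n)$ and hence $\sred\mathbb{H}(X_1^n)-hn\le\sred\mathbb{H}_a(X_1^n)-hn$. Both sides are eventually nonnegative (the Shannon side from subadditivity of entropy and the definition of $h$, the algorithmic side again from (\ref{SourceCoding})), so monotonicity of $\log^+$ transfers the inequality to Hilberg exponents. The redundancy identities (\ref{RedundancyMI}) and (\ref{RedundancyMIAlg}) then deliver the first inequality.

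For the second inequality, the first step is to exchange algorithmic for PPM redundancy using (\ref{ShannonFanoPPM}):
\begin{align*}
  \sred\mathbb{H}_a(X_1^n)-hn\le\sred\mathbb{H}_{\PPM}(X_1^n)-hn+2\log n+C_{\PPM},
\end{align*}
where the additive $2\log n+C_{\PPM}$ is invisible to the Hilberg exponent. Next, Theorem \ref{theoPPMUniversal} gives $\sred\mathbb{H}_{\PPM}(X_1^n)/n\to h$, so I would apply Theorem \ref{theoHilbergRedundancy} to $\mathfrak{G}(n)=\sred\mathbb{H}_{\PPM}(X_1^n)$ to convert the redundancy into PPM mutual information, obtaining $\hilberg_{n\to\infty}\kwad{\sred\mathbb{H}_{\PPM}(X_1^n)-hn}\le\hilberg_{n\to\infty}\sred\mathbb{I}_{\PPM}(X_1^n;X_{n+1}^{2n})$. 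The final step is to invoke the pointwise bound of Theorem \ref{theoPPMVocabMI} and take expectations, pulling $\sred$ inside the logarithmic terms via Jensen's inequality, which yields a bound of the form
\begin{align*}
  \sred\mathbb{I}_{\PPM}(X_1^n;X_{n+1}^{2n})\le c\kwad{\sred G_{\PPM}(X_1^{2n})+\sred\card V_{\PPM}(X_1^{2n})}\kwad{1+\log n}
\end{align*}
for a constant $c$ depending only on $D$. Passing to the Hilberg exponent absorbs both the multiplicative $\log n$ factor and the $n\mapsto 2n$ dilation of argument, producing $\hilberg_{n\to\infty}\sred\mathbb{I}_{\PPM}(X_1^n;X_{n+1}^{2n})\le\hilberg_{n\to\infty}\sred\kwad{G_{\PPM}(X_1^n)+\card V_{\PPM}(X_1^n)}$. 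Chaining this with (\ref{RedundancyMIAlg}) closes the second inequality.

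The main obstacle is more bookkeeping than conceptual: one must verify cleanly at the outset that the Hilberg exponent is invariant under multiplication by polylogarithmic factors and under replacing $s(n)$ by $s(2n)$. Both facts follow directly from the definition of $\hilberg$ as a $\limsup$ over dyadic scales $n=2^m$, but they are invoked at several points in the chain and must be established once and reused. Once this is in place, every remaining ingredient has already appeared in the excerpt, namely (\ref{SourceCoding}), (\ref{ShannonFanoPPM}), (\ref{RedundancyMI}), (\ref{RedundancyMIAlg}), and Theorems \ref{theoPPMUniversal}, \ref{theoPPMVocabMI}, and \ref{theoHilbergRedundancy}.
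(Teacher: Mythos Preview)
Your proposal is correct and follows essentially the same route as the paper's proof: chain the redundancies $\sred\mathbb{H}(X_1^n)-hn\le\sred\mathbb{H}_a(X_1^n)-hn\le\sred\mathbb{H}_{\PPM}(X_1^n)-hn$ via (\ref{SourceCoding}) and (\ref{ShannonFanoPPM}), convert to mutual informations via (\ref{RedundancyMI}), (\ref{RedundancyMIAlg}), and Theorem~\ref{theoHilbergRedundancy} (using Theorem~\ref{theoPPMUniversal} for the PPM limit), and finish with Theorem~\ref{theoPPMVocabMI}. Your explicit remarks on the invariance of the Hilberg exponent under polylogarithmic multipliers and the dilation $n\mapsto 2n$ are exactly the bookkeeping the paper leaves implicit when it passes silently from $G_{\PPM}(X_1^{2n})$, $\card V_{\PPM}(X_1^{2n})$ to the same quantities at argument $n$.
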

\begin{proof}
By Theorem \ref{theoPPMVocabMI}, we obtain
\begin{align}
  \hilberg_{n\rightarrow\infty} \sred \mathbb{I}_{\PPM}(X_1^n;X_{n+1}^{2n})
  &\le \hilberg_{n\rightarrow\infty} \sred
  \kwad{G_{\PPM}(X_1^n)+\card V_{\PPM}(X_1^n)} .
  \label{PPMMIWords}
\end{align}
In contrast, Theorems \ref{theoPPMUniversal} and
\ref{theoHilbergRedundancy} and inequalities (\ref{SourceCoding})
and (\ref{ShannonFanoPPM}) yield
\begin{align}
  \hilberg_{n\rightarrow\infty}
  \kwad{\sred\mathbb{H}(X_1^n)-hn}
  &\le
  \hilberg_{n\rightarrow\infty}
  \kwad{\sred\mathbb{H}_a(X_1^n)-hn}
  \nonumber\\
  &\le
  \hilberg_{n\rightarrow\infty}
  \kwad{\sred\mathbb{H}_{\PPM}(X_1^n)-hn}
  \nonumber\\
  &\le
  \hilberg_{n\rightarrow\infty} \sred \mathbb{I}_{\PPM}(X_1^n;X_{n+1}^{2n})
\end{align}
Hence by equalities (\ref{RedundancyMI}) and (\ref{RedundancyMIAlg}),
we obtain inequality (\ref{MIWords}).
\end{proof}

\section{Hilberg exponents for Santa Fe processes}
\label{secSantaFe}

We begin with a general observation for Hilberg exponents. In
\cite{Debowski15d} this result was discussed only for the Hilberg
exponent of mutual information.
\begin{theorem}[cf.\ \cite{Debowski15d}]
  \label{theoHilbergExp}
  For a sequence of random variables $Y_n\ge 0$, we have
\begin{align}
  \label{HilbergExp}
  \hilberg_{n\rightarrow\infty} Y_n \le
  \hilberg_{n\rightarrow\infty} \sred Y_n \text{ almost surely}
  .
\end{align}
\end{theorem}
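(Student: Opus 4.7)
The plan is to reduce the almost-sure bound to a routine application of Markov's inequality plus the first Borel--Cantelli lemma, exploiting the fact that the Hilberg exponent is defined along the exponentially sparse subsequence $n \mapsto 2^n$, which makes the tail probabilities summable.

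Set $\beta := \hilberg_{n\to\infty} \sred Y_n$. The case $\beta = \infty$ is trivial, so assume $\beta < \infty$. Fix $\varepsilon > 0$. By the definition of the Hilberg exponent along the subsequence $2^n$, we have $\log^+ \sred Y_{2^n} \le (\beta + \varepsilon) \log 2^n$ for all sufficiently large $n$, which by unpacking $\log^+$ gives $\sred Y_{2^n} \le 2^{(\beta+\varepsilon)n}$ eventually. In particular $\sred Y_{2^n}$ is finite, so Markov's inequality applies to the nonnegative random variable $Y_{2^n}$ and yields
\begin{align}
  P\bigl(Y_{2^n} \ge 2^{(\beta + 2\varepsilon) n}\bigr)
  \le \frac{\sred Y_{2^n}}{2^{(\beta + 2\varepsilon) n}}
  \le 2^{-\varepsilon n}
\end{align}
for all but finitely many $n$.

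Since $\sum_{n=1}^\infty 2^{-\varepsilon n} < \infty$, the first Borel--Cantelli lemma implies that, almost surely, $Y_{2^n} < 2^{(\beta + 2\varepsilon) n}$ for all sufficiently large $n$. On this event we obtain $\log^+ Y_{2^n} \le (\beta + 2\varepsilon) n \log 2 + \log 2$, whence
\begin{align}
  \limsup_{n \to \infty} \frac{\log^+ Y_{2^n}}{\log 2^n}
  \le \beta + 2\varepsilon
  \quad \text{almost surely}.
\end{align}

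To finish, I would intersect these probability-one events over a countable sequence $\varepsilon_k \downarrow 0$; the intersection still has probability one, and on it the left-hand side is bounded by $\beta$, which is exactly the claim. The main care point is not a deep obstacle but a bookkeeping one: one must remember that the Hilberg exponent is defined with $\log^+$ rather than $\log$, so the translation between ``$\log^+$ bound on the exponent'' and ``bound on the variable itself'' must absorb the additive $+1$'s, and the summability of $2^{-\varepsilon n}$ that justifies Borel--Cantelli is precisely what the exponential thinning built into the Hilberg exponent buys us (a direct analogue along all $n$ would fail without further moment assumptions).
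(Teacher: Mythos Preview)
Your proof is correct and follows essentially the same route as the paper: Markov's inequality along the exponentially thinned subsequence $2^n$ to get summable tail probabilities, then the first Borel--Cantelli lemma, then sending $\varepsilon\downarrow 0$. Your write-up is in fact slightly more careful than the paper's, which does not explicitly treat the $\beta=\infty$ case or the countable intersection over $\varepsilon$, and does not unpack the $\log^+$ bookkeeping.
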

\begin{proof}
Denote $\delta:=\hilberg_{n\rightarrow\infty} \sred Y_n$. From the
Markov inequality, we have
\begin{align}
  \sum_{k=1}^\infty
  P\okra{\frac{Y_{2^k}}{2^{k(\delta+\epsilon)}}\ge 1}
  &\le
  \sum_{k=1}^\infty
  \frac{\sred Y_{2^k}}{2^{k(\delta+\epsilon)}}
  \nonumber\\
  &\le
  A 
  +
  \sum_{k=1}^\infty
  \frac{2^{k(\delta+\epsilon/2)}}{2^{k(\delta+\epsilon)}}
  <\infty
  ,
\end{align}
where $A<\infty$. Hence, by the Borel-Cantelli lemma we have $Y_{2^k}<
2^{k(\delta+\epsilon)}$ for all but finitely many $n$ almost surely.
Since we can choose $\epsilon$ arbitrarily small, in particular we obtain
inequality (\ref{HilbergExp}).  
\end{proof}

In \cite{Debowski12} and \cite{Debowski15d} it was shown that the Santa Fe
process with exponent $\alpha$ satisfies equalities
\begin{align}
  \hilberg_{n\rightarrow\infty} \mathbb{I}(X_{-n+1}^0;X_1^{n}) 
  &=
    1/\alpha \text{ almost surely}
    ,
  \\
  \hilberg_{n\rightarrow\infty} \sred \mathbb{I}(X_{-n+1}^0;X_1^{n}) 
  &=
  1/\alpha
  .
\end{align}
We will now show a similar result for the number of probabilistic
facts inferrable from the Santa Fe process almost surely and in
expectation. Since Santa Fe processes are processes over an infinite
alphabet, we cannot apply the theorem about facts and words.

\begin{theorem}
  \label{theoFacts}
  For the Santa Fe process with exponent $\alpha$ we have
  \begin{align}
    \label{FactsAs}
  \hilberg_{n\rightarrow\infty} \card U(X_1^n) 
  &=
  1/\alpha \text{ almost surely}
  ,
    \\
    \label{FactsExp}
  \hilberg_{n\rightarrow\infty} \sred\card U(X_1^n) 
  &=
  1/\alpha
  .
\end{align}
\end{theorem}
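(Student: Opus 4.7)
The starting observation is that $\card U(X_1^n)$ admits a clean combinatorial description for the Santa Fe process. By the construction of $g$ in (\ref{SantaFePredictor}), the identity $g(k;X_1^n)=Z_k$ holds if and only if $K_i=k$ for some $1\le i\le n$. Hence $\card U(X_1^n)$ is the largest integer $l$ such that $\klam{1,2,\ldots,l}\subseteq\klam{K_1,\ldots,K_n}$, and the problem reduces to a covering question on the IID sequence $(K_i)$ with $p_k:=P(K_i=k)=Ck^{-\alpha}$, $C:=1/\zeta(\alpha)$. The critical scale is $l\asymp n^{1/\alpha}$, since the expected multiplicity $np_k\asymp n/k^\alpha$ is of order unity there.

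For the expected value the plan is to use the layer-cake identity $\sred\card U(X_1^n)=\sum_{l\ge 1}P(\card U(X_1^n)\ge l)$ and bound each term above by $1-(1-p_l)^n\le\min\klam{1,np_l}$. Splitting the sum at $L=\ceil{n^{1/\alpha}}$ and using $\sum_{l>L}np_l=O(nL^{1-\alpha})=O(n^{1/\alpha})$ gives $\sred\card U(X_1^n)=O(n^{1/\alpha})$, whence $\hilberg_{n\rightarrow\infty}\sred\card U(X_1^n)\le 1/\alpha$. For the matching lower bound on the expectation, a union bound yields
\begin{align*}
P(\card U(X_1^n)<l)\le\sum_{k=1}^l(1-p_k)^n\le\sum_{k=1}^l\exp(-np_k).
\end{align*}
Taking $l_n=\floor{(nC/(2\log n))^{1/\alpha}}$ makes $np_k\ge 2\log n$ throughout $k\le l_n$, so the right-hand side is at most $l_n/n^2\to 0$; hence $\sred\card U(X_1^n)\ge l_n(1-o(1))$, and the polylogarithmic factor is absorbed inside the Hilberg exponent, yielding $\hilberg_{n\rightarrow\infty}\sred\card U(X_1^n)\ge 1/\alpha$.

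For the almost sure version, the inequality $\hilberg_{n\rightarrow\infty}\card U(X_1^n)\le 1/\alpha$ is immediate from Theorem \ref{theoHilbergExp} applied to $Y_n=\card U(X_1^n)$. For the reverse direction, pass to the sparse subsequence $n=2^m$ (which is precisely where the Hilberg exponent is evaluated) and apply the first Borel--Cantelli lemma: setting $l_m=\floor{2^{m(1/\alpha-\epsilon)}}$, the union bound from the previous paragraph gives
\begin{align*}
P(\card U(X_1^{2^m})<l_m)\le l_m\exp(-C\cdot 2^{m\alpha\epsilon}),
\end{align*}
which is summable in $m$ because the exponential swamps any polynomial prefactor. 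Therefore almost surely $\card U(X_1^{2^m})\ge 2^{m(1/\alpha-\epsilon)}$ for all but finitely many $m$, so $\hilberg_{n\rightarrow\infty}\card U(X_1^n)\ge 1/\alpha-\epsilon$ almost surely; letting $\epsilon\downarrow 0$ along a countable rational sequence concludes.

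No step looks genuinely deep. The main care-requiring point is to verify that the polylogarithmic losses incurred in the lower bound --- the $(\log n)^{1/\alpha}$ factor in the expectation argument and the $\epsilon$-shift in the Borel--Cantelli step --- do not prevent the Hilberg exponent from reaching the sharp value $1/\alpha$; this works precisely because $\hilberg$ is defined as $\log^+ s(2^n)/\log 2^n$, which annihilates every sub-polynomial correction. A minor separate check is that the upper bound $1-(1-p_l)^n\le np_l$ remains sharp enough when summed against the power-law tail $p_l\asymp l^{-\alpha}$, which it does, the tail integral being $L^{1-\alpha}/(\alpha-1)$.
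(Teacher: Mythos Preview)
Your argument is correct. The lower-bound half coincides with the paper's: both use the union bound $P(\card U(X_1^n)<l)\le\sum_{k\le l}(1-p_k)^n$ together with Borel--Cantelli, the only cosmetic difference being that you run Borel--Cantelli on the subsequence $n=2^m$ (which is natural, since $\hilberg$ only looks there) while the paper sums over all $n$.

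The upper-bound half is where your route genuinely diverges. The paper bounds $P(\card U(X_1^n)\ge M_n)$ directly by the multinomial-type estimate $\frac{n!}{(n-M_n)!}\prod_{k\le M_n}p_k$, then expands via Stirling to get a stretched-exponential tail at $M_n=Cn^{1/\alpha}$; Borel--Cantelli then gives the almost-sure upper bound, and the expectation upper bound follows from the truncation $\sred\card U\le M_n+nP(\card U\ge M_n)$. You instead obtain the expectation upper bound first, via the layer-cake identity and the single-coordinate bound $P(\card U\ge l)\le 1-(1-p_l)^n\le np_l$, and then derive the almost-sure upper bound from Theorem~\ref{theoHilbergExp}. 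Your argument is shorter and avoids the factorial asymptotics entirely; the paper's argument, in exchange, yields a sharper tail estimate (stretched-exponential decay of $P(\card U\ge Cn^{1/\alpha})$) that your layer-cake bound does not see. Correspondingly, the two proofs invoke Theorem~\ref{theoHilbergExp} in opposite directions: the paper uses it to pass from the almost-sure lower bound to the expectation lower bound, while you use it to pass from the expectation upper bound to the almost-sure upper bound.
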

\begin{proof}
First, we obtain
\begin{align}
  P(\card U(X_1^n)\le m_n)
  &\le
  \sum_{k=1}^{m_n}
  P(g(k;X_1^n)\neq Z_k)
  =
  \sum_{k=1}^{m_n}
  \kwad{1-P(K_i=k)}^n
  \nonumber\\
  &\le m_n\kwad{1-\frac{m_n^{-\alpha}}{\zeta(\alpha)}}^n
  \le m_n\exp\okra{-nm_n^{-\alpha}/\zeta(\alpha)},
\end{align}
where $\zeta(\alpha):=\sum_{k=1}^\infty k^{-\alpha}$ is the zeta
function. Put now $m_n=n^{1/\alpha-\epsilon}$ for an $\epsilon>0$. It
is easy to observe that
$\sum_{n=1}^\infty P(\card U(X_1^n)\le m_n)<\infty$. Hence by the
Borel-Cantelli lemma, we have inequality $\card U(X_1^n)> m_n$ for all
but finitely many $n$ almost surely.

Second, we obtain
\begin{align}
  P(\card U(X_1^n)\ge M_n)
  &\le
    \frac{n!}{(n-M_n)!}    
  \prod_{k=1}^{M_n}
    P(K_i=k)
  \nonumber\\
  &= \frac{n!}{(n-M_n)!(M_n!)^\alpha[\zeta(\alpha)]^{M_n}}.
\end{align}
Recalling from Appendix \ref{secMIWords} that
$\log n!=n(\log n -1)+\log^*n$, where $\log^*n\le \log(n+2)$ is
subadditive, we obtain
\begin{align}
  &\log P(\card U(X_1^n)\ge M_n)
  \nonumber\\
  &\qquad\le
  n(\log n -1)-(n-M_n)\kwad{\log(n-M_n)-1}
  \nonumber\\
  &\qquad\quad -\alpha M_n(\log M_n-1)
    +\log^* M_n-M_n\log \zeta(\alpha)
  \nonumber\\
  &\qquad\le  
    M_n\kwad{\log n -\alpha(\log M_n-1) -\log \zeta(\alpha)}
    +\log^* M_n
\end{align}
by $\log n\le \log(n-M_n)+\frac{M_n}{n}$. Put now $M_n=Cn^{1/\alpha}$
for a $C>e[\zeta(\alpha)]^{-1/\alpha}$.  We obtain
\begin{align}
  \label{StretchedExponential}
  P(\card U(X_1^n)\ge M_n)\le (Cn^{1/\alpha}+2)\exp(-\delta n^{1/\alpha})
\end{align}
where $\delta>0$ so
$\sum_{n=1}^\infty P(\card U(X_1^n)\ge M_n)<\infty$. Hence by the
Borel-Cantelli lemma, we have inequality $\card U(X_1^n)< M_n$ for all
but finitely many $n$ almost surely. Combining this result with the
previous result yields equality (\ref{FactsAs}).

To obtain equality (\ref{FactsExp}), we invoke Theorem
\ref{theoHilbergExp} for the lower bound, whereas for the upper bound
we observe that
\begin{align}
  \sred\card U(X_1^n)\le M_n+nP(\card U(X_1^n)\ge M_n)  
\end{align}
where the last term decays according to the stretched exponential bound
(\ref{StretchedExponential}) for $M_n=Cn^{1/\alpha}$.
\end{proof}

\bibliographystyle{IEEEtran}

\bibliography{0-journals-abbrv,0-publishers-abbrv,ai,mine,tcs,ql,books,nlp}

\end{document}